\documentclass[acmsmall]{acmart}

\usepackage{amsmath,amsthm}
\usepackage{graphicx}
\usepackage[normalem]{ulem}
\usepackage{enumitem}
\usepackage{tikz}
\usetikzlibrary{positioning}
\usepackage{lipsum}
\usepackage{comment}
\usepackage[ruled,vlined,linesnumbered]{algorithm2e}
\usepackage{mathtools}
\usepackage{fontawesome}
\usepackage{array}
\usepackage{booktabs}

\usepackage{multirow}
\usepackage{makecell}
\usepackage{multicol}
\newcommand{\om}{CTL}

\newcommand{\otree}{\preceq}
\newcommand{\anc}{anc}

\newcommand{\Tau}{\mathcal{T}}

\newcommand*{\sep}{;\hspace{0.5em}}

\newcolumntype{H}{>{\setbox0=\hbox\bgroup}c<{\egroup}@{}}

\newtheorem{property}{Property}%

\renewcommand\footnotetextcopyrightpermission[1]{}

\AtBeginDocument{%
  }

\setcopyright{acmlicensed}
\copyrightyear{2018}
\acmYear{2018}
\acmDOI{XXXXXXX.XXXXXXX}

\begin{document}

\makeatletter
\def\ps@firstpagestyle{%
  \let\@oddhead\@empty
  \let\@evenhead\@empty
  \let\@oddfoot\@empty
  \let\@evenfoot\@empty
}
\makeatother

\pagestyle{plain}

\title{Optimized Customizable Route Planning in Large Road Networks with Batch Processing}


\author{Muhammad Farhan}
\affiliation{%
  \institution{Australian National University}
  \city{Canberra}
  \country{Australia}}
\email{muhammad.farhan@anu.edu.au}

\author{Henning Koehler}
\affiliation{%
  \institution{Massey University}
  \city{Palmerston North}
  \country{New Zealand}}
\email{h.koehler@massey.ac.nz}

\begin{abstract}
Modern route planners such as Google Maps and Apple Maps serve millions of users worldwide, optmizing routes in large-scale road networks where fast responses are required under diverse cost metrics including travel time, fuel consumption, and toll costs. Classical algorithms like Dijkstra or A$^*$ are too slow at this scale, and while index-based techniques achieve fast queries, they are often tied to fixed metrics, making them unsuitable for dynamic conditions or user-specific metrics. Customizable approaches address this limitation by separating metric-independent preprocessing and metric-dependent customization, but they remain limited by slower query performance.
Notably, Customizable Tree Labeling (CTL) was recently introduced as a promising framework that combines tree labelings with shortcut graphs. The shortcut graph enables efficient customization to different cost metrics, while tree labeling, supported by path arrays, provides fast query answering. Although CTL enables optimizing routes under different cost metrics, it still faces challenges in storing and reconstructing path information efficiently, which hinders its scalability for answering millions of queries. In this article, we build on the Customizable Tree Labeling framework to introduce new optimizations for the storage and reconstruction of path information. We develop several algorithmic variants that differ in the information retained within shortcut graphs and path arrays, offering a spectrum of trade-offs between memory usage and query performance. To further enhance scalability, we propose a batch processing strategy that shares path information across queries to eliminate redundant computation.
Empirically, we have evaluated the performance of our algorithms on 13 real-world road networks. The results show that they significantly outperform state-of-the-art methods, achieving speedups of up to factor 15 for route computation while maintaining practical memory requirements.
\end{abstract}

\maketitle

\section{Introduction}
Finding an optimal route between two locations in a road network is a fundamental task in modern navigation systems.
What constitutes an optimal route, however, can vary widely. Some users want the fastest path to minimize travel time, others prefer routes that minimize distance to reduce fuel consumption or aim to avoid tolls, and many applications must balance several factors at once. For example, logistics companies must minimize both delivery time and fuel costs while meeting customer deadlines~\cite{delling2017customizable}, eco-routing systems trade-off between travel time and vehicle energy consumption~\cite{ahn2008effects}, and multi-modal transport planners balance transfer times, ticket costs, and overall travel duration~\cite{bast2016route}. This diversity makes route planning more complex than simply finding the shortest path under a single fixed cost metric.

To optimize route computation at scale, a plethora of methods have been developed~\cite{geisberger2008contraction,goldberg2005computing,bast2006transit,arz2013transit,jung2002efficient,10.1007/11561071_51,chen2021p2h,ouyang2018hierarchy,abraham2012hierarchical,abraham2011hub,farhan2023hierarchical,akiba2014fast,jin2012highway,hart1968formal,sanders2006engineering,10.1145/2463676.2465277}, most notably Contraction Hierarchies (CH)~\cite{geisberger2008contraction}, Hub Labeling~(HL) \cite{abraham2011hub,abraham2012hierarchical}, and Transit Node Routing (TNR)~\cite{bast2006transit,arz2013transit}. These approaches preprocess the network to build auxiliary data structures that allow queries to be answered in milliseconds or even microseconds. Although highly effective when the cost metric is fixed, supporting multiple metrics would require building separate auxiliary structures, which can quickly becomes infeasible in terms of memory. Moreover, updating these structures when edge weights change, for example due to traffic conditions, is costly. While incremental maintenance techniques exist~\cite{ouyang2020efficient,zhang2022relative,farhan2025dual,koehler2025stable,geisberger2012exact}, they are only efficient when a very small number of changes in the metric occur at a time.

To address these challenges, customizable techniques were developed~\cite{delling2017customizable,dibbelt2016customizable,blum2022customizable,farhan2025customization}. Their key idea is to separate preprocessing into two phases. The first phase is independent of the metric and captures only the topological structure of the road network, which rarely changes. The second phase is metric-dependent and adapts this structure to a specific cost metric, which can then be used to answer queries efficiently. This approach makes it possible to efficiently reflect changing cost metrics without recomputing everything from scratch. The most well-known customization techniques are Customizable Route Planning (CRP) \cite{delling2017customizable} and Customizable Contraction Hierarchies (CCH) \cite{dibbelt2016customizable}. They achieve very fast customization but still suffer from slower query times compared to classical CH or TNR. Customizable Hub Labeling (CuHL) \cite{blum2022customizable} extended the customization idea to hub labeling, but this work remained theoretical and did not provide practical algorithms.

\renewcommand{\arraystretch}{1.2} 

\begin{figure}
    \centering
    \includegraphics[width=\textwidth]{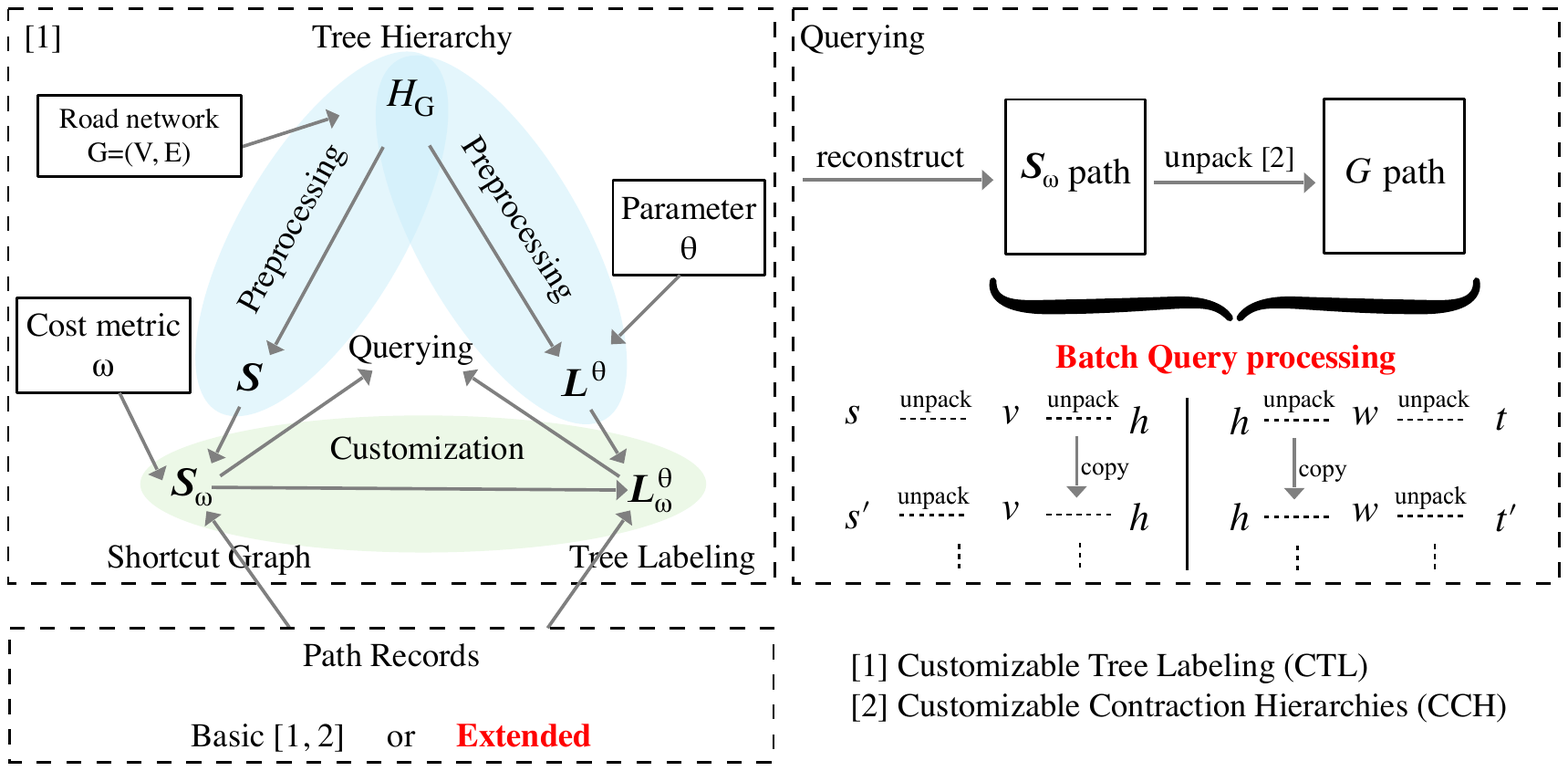}
    \caption{Framework overview: relationship to existing work; novel contributions highlighted in red.}
    \label{fig:overview}
\end{figure}

\begin{figure}
    \begin{minipage}{0.45\textwidth}
    \centering
    \resizebox{\textwidth}{!}{
    \begin{tabular}{cccc}
    \hline
    SG $\backslash$ PA & \textbf{n}one & \textbf{b}asic & \textbf{e}xtended \\
    \hline
    \textbf{b}asic    & CTL$_{bn}$ & CTL$_{bb}$ & -- \\
    \textbf{e}xtended & CTL$_{en}$ & CTL$_{eb}$ & CTL$_{ee}$ \\
    \hline
    \end{tabular}}\vspace{0.2cm}
    \end{minipage}
    \hfill
    \begin{minipage}{0.5\textwidth}
    \centering
    \includegraphics[width=\textwidth]{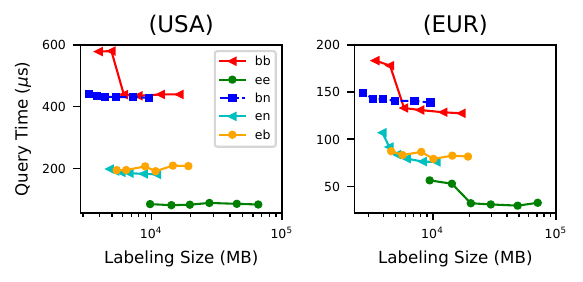}
    \end{minipage}
    \caption{(left) CTL variants from different storage choices in shortcut graphs (SG) and path arrays (PA). (right) Query time vs. labeling size for these variants on large networks.}
    \label{fig:variants}
    \label{fig:tradeoffs}
\end{figure}

Very recently, we proposed Customizable Tree Labeling (CTL) in \cite{farhan2025customization}, which combines ideas from labeling and hierarchical methods. CTL introduces a \emph{customizable labeling framework} based on the 2-hop cover property. The framework uses a tree hierarchy to capture the topological structure of the road network, a shortcut graph scheme to integrate new metrics, and tree labels to support query answering. CTL also introduces a parameterized customization method, giving flexibility to balance preprocessing effort, customization time, and query speed. While CTL represents an important advance, it also has limitations. In particular, the way CTL handles path information is not fully optimized for efficient route computation. Although CTL allows fast distance queries, reconstructing actual paths can be very slow, making it the main bottleneck.

In this article, we leverage CTL by exploring how different choices of precomputed path information affect performance. Our focus is on two components: the shortcut graph (SG) and the path arrays (PA). For SG, we consider basic storage, which only keeps the triangle node needed for path reconstruction, and extended storage, which records richer path information. For PA, we examine three options: no path arrays, basic storage that records only the endpoint of the first shortcut, and extended storage that includes both the endpoint and additional path information. Combining these options yields six variants of CTL, five of which are practical, as shown in Figure~\ref{fig:variants} (left). Each variant represents a different trade-off between memory requirements and query performance. Variants without path arrays use less memory but are slower at query time. In contrast, variants with extended path storage achieve much faster queries but at the cost of larger label sizes. Our experimental results shown in Figure~\ref{fig:tradeoffs} (right) confirm these trade-offs. In addition, we propose a batch processing approach which further improves query times when doing so is most critical, i.e., during high workloads.
Here we find that our 2-hop structure based on a shortcut graph naturally lends itself for identifying common subpaths which can be reused, as illustrated in Figure~\ref{fig:overview}.

\vspace{0.15cm}
\noindent\textbf{Contributions.~}This article is an extended version of~\cite{farhan2025customization}. The following contributions are novel, as highlighted in Figure~\ref{fig:overview}.
\begin{itemize}
    \item We propose an effective approach for storing additional path information within shortcuts and path arrays to speed up answering shortest path queries (Section~\ref{section:routing}).

    \item We introduce a batch processing approach that exploits shared subpaths among multiple query pairs. This reduces redundant computation, leading to significant improvements in scalability under high workloads (Section~\ref{sec:batch_processing}).
\end{itemize}
We evaluate the tradeoffs that storing additional path information brings theoretically and experimentally, and compare them against similar trade-offs stemming from parameterized labeling.
Results on 13 large real-world road networks show that our methods achieve significant speedups over state-of-the-art customizable approaches while keeping memory usage practical. In particular, the use of extended path information can reduce query times by factor 5 or more, while batch processing can speed up queries by a full order of magnitude.

\vspace{0.15cm}
\noindent\textbf{Outline.~}The rest of this article is organized as follows. Section~\ref{section:background} reviews related work in the literature. Section~\ref{section:preliminaries} introduces the basic notations and definitions used throughout the article. Section~\ref{section:framework} presents an overview of the framework. Section~\ref{section:tree-labeling} describes Customizable Tree Labeling (CTL), focusing on metric-independent preprocessing and metric-dependent customization. Section~\ref{section:parameterized} introduces parameterized tree labeling and the integrated querying algorithm that combines tree labeling with shortcut graphs. Section~\ref{section:routing} details our proposed algorithms for shortest path identification and their complexity analysis. Section~\ref{sec:batch_processing} introduced an efficient approach for processing shortest path queries in batches. Section~\ref{section:variants} discusses extensions of our work, including parallel customization and directed road networks. Section~\ref{section:experiments} reports our experimental results on large real-world road networks. Finally, Section~\ref{section:conclusion} concludes the article.

\newcommand{\abs}[1]{|#1|}
\section{Related Work}\label{section:background}

\subsection{Methods for Classical Routing}
Classical routing problem requires to find the minimum cost path considering a single static cost metric. Traditionally, Dijkstra’s algorithm~\cite{tarjan1983data} is used to solve classical routing problem. However, it may take several seconds to answer a single query on large road networks, which is impractical for applications which require to compute routes in the order of microseconds or nanoseconds. To accelerate route computation, numerous methods have been developed~\cite{geisberger2008contraction,goldberg2005computing,bast2006transit,arz2013transit,jung2002efficient,10.1007/11561071_51,chen2021p2h,ouyang2018hierarchy,abraham2012hierarchical,abraham2011hub,farhan2023hierarchical,akiba2014fast,jin2012highway,hart1968formal,sanders2006engineering,10.1145/2463676.2465277}. which can be broadly classified into two categories: 1) \emph{search-based methods}~\cite{geisberger2008contraction,goldberg2005computing,bast2006transit,arz2013transit,jung2002efficient,10.1007/11561071_51,sanders2006engineering,10.1145/2463676.2465277}, and 2) \emph{labelling-based
methods}~\cite{chen2021p2h,ouyang2018hierarchy,abraham2012hierarchical,abraham2011hub,farhan2023hierarchical,akiba2014fast,jin2012highway}. Among search-based methods,
Contraction Hierarchy (CH) [13] has demonstrated outstanding performance in practice. The key idea behind CH is to contract vertices in a particular order, by introducing shortcuts among their neighbors to maintain distance information. These shortcuts significantly reduce the search space during query time, leading to faster query responses. Despite its efficiency in pruning the search space, CH may still require exploring many paths.

To address the limitations of search-based methods, labelling-based methods have been developed with great success~\cite{abraham2011hub,akiba2013fast,ouyang2018hierarchy,chen2021p2h,farhan2023hierarchical,akiba2014fast,bast2006transit,arz2013transit}. These methods precompute labels for all vertices that capture the shortest path information. Rather than performing a search over the graph, the algorithm simply examines the precomputed labels to retrieve the path information. Labelling-based methods can find routes significantly faster than search-based methods, at the cost of requiring additional space for storing labels. Notably, the most advanced labelling-based approaches~\cite{ouyang2018hierarchy,chen2021p2h,farhan2018highly} exploit hierarchical structures of road networks to reduce the search space on the labels at query time. Hierarchical 2-Hop (H2H)~\cite{ouyang2018hierarchy,ouyang2023hierarchy} and Projected vertex separator based 2-Hop labeling (P2H)~\cite{chen2021p2h} utilize the tree decomposition~\cite{bodlaender2006treewidth} of a road network to define a vertex hierarchy. Hierarchical Cut 2-hop Labelling (HC2L) \cite{farhan2023hierarchical} recursively partitions a road network to construct a balanced tree hierarchy among vertices. This balanced tree hierarchy enabled HC2L to further reduce the search space on labels at query time, and makes it the current state-of-the-art method for classical routing in static road networks.
Dual Hierarchy Labeling (DHL)~\cite{farhan2025dual} extends this approach to dynamic networks.

\subsection{Methods for Customizable Routing}
Dynamic changes on edge weights in road networks, such as those caused by varying traffic conditions or road closures, necessitate customizable routing that can efficiently update precomputed data for accurate querying. Existing methods for customizable routing broadly fall into two categories: 1) \emph{incremental maintenance methods}~\cite{geisberger2012exact,zhang2022relative,zhang2021dynamic,ouyang2020efficient,farhan2025dual,koehler2025stable} and 2) \emph{customizable methods}~\cite{dibbelt2016customizable,delling2017customizable,blum2022customizable}.
These are also referred to as partially or fully customizable methods, respectively.

Incremental maintenance methods identify and repair parts of the precomputed auxiliary data structure affected by dynamic updates, rather than recomputing everything from scratch. However, this approach is only effective when the fraction of edge weights changed is very small, and becomes impractical in scenarios where large parts of the network are affected, e.g. reported travel times which may get updated every few minutes.

Customizable methods separate topological structure from metric properties. This division allows the metric-independent preprocessing phase to focus solely on the network structure, enabling quick adjustments to varying metrics without extensive re-computation. The well-known customisation techniques are customisable route planning (CRP) \cite{delling2017customizable} and customizable contraction hierarchies (CCH) \cite{dibbelt2016customizable}. CRP \cite{delling2017customizable} pre-computes multilevel partition-based overlay graphs using separa\-tor-based techniques. CCH \cite{dibbelt2016customizable} utilize a nested dissection order \cite{george1973nested} for constructing a CH, and customize the CH edges for query answering. Both methods utilize metric-independent auxiliary data to avoid re-computa\-tion of the entire process, making them highly practical for scenarios where edge weights frequently change. However, their low customization times and small memory footprints are offset by query times that are significantly higher than those achieved by labelling-based methods. Another very recent work, Customizable Hub Labelling (CHL) \cite{blum2022customizable}, exploits theoretical properties of CCH to apply the customization paradigm to hub labelling (HL)~\cite{abraham2011hub,abraham2012hierarchical}. While it has state-of-the-art query performance, customization is very inefficient.
\section{Preliminaries}\label{section:preliminaries}
Let \( G = (V, E) \) represent a road network. A \emph{(cost) metric} is a function \( \omega: E \to \mathbb{R}_{>0} \) that assigns a positive cost \( \omega(u, v) \) to each edge \( (u, v) \in E \), such as travel time. This metric may be initially unknown or dynamically determined based on the road network's properties or external factors. A path is a sequence of distinct vertices \( p = (v_1, v_2, \dots, v_k) \), where \( (v_i, v_{i+1}) \in E \) for all \( 1 \leq i \leq k-1 \). The cost of a path \( p \), given a metric \( \omega \), is 
$\omega(p) = \sum_{i=1}^{k-1} \omega(v_i, v_{i+1}).$ An \emph{optimal route} between two vertices \( s \) and \( t \) minimizes the total path cost with respect to a metric \( \omega \). We denote the set of all optimal routes between \( s \) and \( t \) in \( G \) under the metric \( \omega \) by \( P^{\omega}_G(s,t) \), and the \emph{optimal cost} by \( d^\omega_G(s, t) \). 

We formally define the customizable routing problem as follows.
\begin{definition}[Customizable Routing]  
In a road network \( G = (V, E) \), for any two vertices \( s, t \in V \), the \emph{customziable routing problem} is to efficiently find an optimal route \( p \in P^{\omega}_G(s, t) \) from \( s \) to \( t \) with respect to any given cost metric $\omega$.
\end{definition}
To compute an optimal route from one vertex to another, finding the optimal cost is crucial as it restricts the search space and ensures correctness during path restriction. Customizable routing also applies to both directed and undirected graphs.

\begin{figure}[t]
    \centering
    \includegraphics[width=0.37\textwidth]{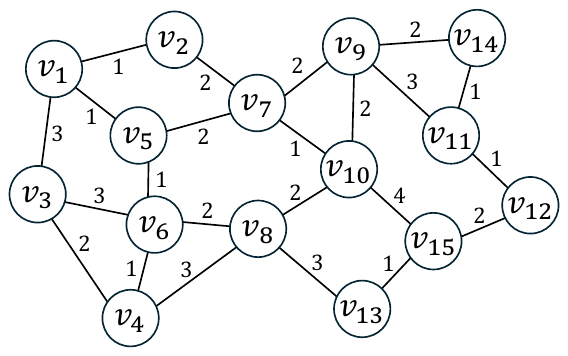}
    \caption{A road network $G$, customized with a cost metric $\omega$.} \label{fig:example_road_network}
\end{figure}

\begin{example}
Figure~\ref{fig:example_road_network} shows a road network $G = (V, E)$ with
15 vertices and 23 edges. Each $e\in E$ is assigned a cost by a metric $\omega$. Consider two vertices, \( v_1 \) and \( v_{12} \), connected by multiple paths. For instance, 
\( p_1 = \langle v_1, v_2, v_7, v_9, v_{11}, v_{12} \rangle \) with cost \( \omega(p_1) = 9 \), 
and \( p_2 = \langle v_1, v_5, v_7, v_{10}, v_{15}, v_{12} \rangle \) with cost \( \omega(p_2) = 10 \). 
Among these, \( p_1 \) is optimal as it has the minimum cost.
\end{example}

Customizable approaches are ineffective when each query uses a unique metric, as the cost of customization typically outweighs that of an index-free search. However, customization works when metrics are frequently reused, such as when the number of distinct metrics is small and multiple customizations can be stored in memory, or when updates reflect periodic changes in costs like travel time.
\section{Framework Overview}\label{section:framework} 
This section introduces a customizable labeling framework designed to enhance query performance. The main challenge lies in designing a preprocessing strategy that generalizes across diverse metrics while achieving efficiency in both customization and querying. To address this, we introduce a parameterization process for labeling that balances customization and query efficiency.




\begin{definition}[2-Hop Labeling \cite{cohen2003reachability}]
Given a cost metric $\omega$, a \emph{2-hop labeling} over $G = (V, E)$ assigns each vertex $v \in V$ a label $L(v)$ containing cost entries $\{(u_1,\delta_{vu_1}), \dots, (u_k,\delta_{vu_k})\}$, where $u_i \in V$.
These labels must ensure that
\[
d^{\omega}_G(s, t) = \min_{v \in L(s) \cap L(t)} \big\{\delta_{sv} + \delta_{tv}\big\}
\]
holds for all $s,t\in V$.
\end{definition}

We shall also refer to a 2-hop labeling w.r.t. $\omega$ as an \emph{$\omega$-labeling} to emphasize the particular metric used.
Note that typically $\delta_{vu_i} = d^{\omega}_G(v, u_i)$, though some approaches use distances within subgraphs instead.
The vertex identifiers $u_i$ are omitted at times, with array indices taking their role in matching cost values.





\subsection{Parameterization}
While 2-hop labelings are highly efficient for query answering, they tend to suffer from large label sizes, compared to shortcut based approaches like CCH or CRP.
We therefore will employ only a \emph{partial} 2-hop cover which only maintains part of the label structure based on a parameter $\theta$.
At one extreme ($\theta=0$) we obtain a full 2-hop cover (specifically a hierarchical customizable hub labeling in the terminology of \cite{blum2022customizable}), at the other ($\theta=\infty$) we obtain a customizable contraction hierarchy, while any parameter setting in between provides a tradeoff between these.

Achieving favorable trade-offs can be challenging, particularly when integrating techniques with differing optimization objectives. For instance, labeling techniques might achieve a query time of \(1 \mu s\) with a label size of \(100 \, \text{GB}\), while shortcut-based methods might provide a \(1 \, \text{ms}\) query time with a label size of \(100 \, \text{MB}\). A combined approach averaging \(500 \, \mu s\) query time and \(50 \, \text{GB}\) label size would often be deemed both slow and large, representing the worst of both worlds.

\subsection{Customizable Labeling Framework}
Let $\mathcal{G}$ represent the set of all road networks, 
$W$ the set of possible cost metrics, $\mathcal{D}$ the set of all data structures independent of any metric, and $\mathcal{D}_{\omega}$ the set of all data structures dependent of a metric $\omega$. 
Our framework consists of three key components:

\begin{itemize}
    \item \emph{\underline{Preprocessing algorithm:}} A function $\mathcal{A}_P: \mathcal{G} \times \mathbb{R}\to \mathcal{D}$ that maps a road network $G$ and parameter $\theta$ to a metric-independent data structure
    $\mathcal{A}_P(G)=(L^{\theta}, S)$. Here, $L^{\theta}$ is a parameterized labeling and satisfies the customizable cover property~\cite{blum2022customizable} when $\theta=0$, and $S$ is a data structure to accelerate customization.
    \item \emph{\underline{Customization algorithm:}} A function
    $\mathcal{A}_C: \mathcal{D} \times W \to \mathcal{D}_{\omega}$ that customizes a preprocessed data structure $\mathcal{A}_P(G) \in \mathcal{D}$ with respect to a metric $\omega \in W$.
    That is, \( \mathcal{A}_C(\mathcal{A}_P(G), \omega) = (L^{\theta}_{\omega}, S_{\omega}) \), where \( L^{\theta}_{\omega} \) and $S_{\omega}$ are customized from $L^{\theta}$ and $S$, respectively.
    By the customizable cover property, \( L^{\theta}_{\omega} \) is guaranteed to be a 2-hop \( \omega \)-labeling when $\theta=0$. 
    \item \emph{\underline{Query algorithm:}} A function
    $\mathcal{A}_Q: \mathcal{D}_{\omega} \times V \times V \to R$ that takes a customized data structure $D_{\omega}=(L^{\theta}_{\omega}, S_{\omega})$ and two vertices $s, t \in V$, returning either the optimal cost or an optimal route from $s$ to $t$ with respect to the metric $\omega$.
\end{itemize}

While conceptually straightforward, designing such a metric-independent data structure while balancing customization efficiency and query performance across varying metrics remains a challenging task. In particular, metric-independent data structure must preserve topological structure of road networks without relying on fixed metrics; customization must efficiently handle large-scale updates; and queries must remain fast despite varying label size.
In the following we first present a basic non-parameterized approach which offers fast query times but suffers from large labeling size and customization times (Section~\ref{section:tree-labeling}).
This then forms the basis for our parameterized approach, which balances query time and labeling size more favorably (Section~\ref{section:parameterized}).


\section{Tree-Based Customizable Routing}\label{section:tree-labeling}
This section presents a tree-based solution for customizable routing, enabling efficient preprocessing, customization, and querying. A key challenge in designing tree-based customizable routing solution is to support efficient topology-based preprocessing that yields compact and query-friendly data structures. Traditional labeling-based methods suffer from large label sizes~\cite{blum2022customizable,abraham2011hub,abraham2012hierarchical}, while shortcut-based methods~\cite{dibbelt2016customizable,delling2017customizable} favor customization speed at the cost of slower queries. Our initial tree-based approach presented here falls into the former category, but also maintains an intermediate short-cut based structure which is used during customization.

\subsection{Metric-Independent Preprocessing}

\subsubsection{Tree Hierarchy}
To efficiently support customizable routing, we introduce a \emph{tree hierarchy}, which provides a topology-based representation of road networks.

\begin{definition}[Tree Hierarchy]\label{def:td}  
Let \(\beta \in (0, 0.5)\). A \emph{tree hierarchy} over a road network \(G=(V,E)\) is a binary tree \(H_G = (\mathcal{N}, \mathcal{E}, f)\), where \(\mathcal{N}\) is the set of tree nodes, \(\mathcal{E}\) is the set of tree edges, and \(f: V \rightarrow \mathcal{N}\) is a total surjective function satisfying two conditions:  

\begin{enumerate}[leftmargin=15pt]   
    \item \emph{\underline{Balanced subtrees:}} For every internal node \(N \in \mathcal{N}\), the left and right subtrees satisfy:
    \[
    |T_{\ell}(N)|, |T_r(N)| \leq (1 - \beta) \cdot |T_{\ell}(N) \cup T_r(N)|,
    \]
    where \(T_{\ell}(N)\) and \(T_r(N)\) are the sets of vertices in the left and right subtrees of \(N\), respectively.

    \item \emph{\underline{Ancestor separation:}} For any two vertices \(s, t \in V\), the set
    \[
    CA(s, t) = \{v \in V \mid f(v) \in A(f(s)) \cap A(f(t))\}
    \]
    of their common tree ancestors contains at least one vertex on each path between \(s\) and \(t\) in \(G\), where \(\mathsf{A}(\cdot)\) denotes the set of ancestor nodes of a tree node.
 
\end{enumerate}  
\end{definition}

Figure~\ref{fig:labeling} shows a tree hierarchy for the road network in Figure~\ref{fig:example_road_network}, with the root containing vertices $\{v_7,v_8\}$.
The balanced subtrees condition keeps the tree compact, ensuring efficiency in label construction and query processing. The ancestor separation condition enables the use of common ancestors as hubs for 2-hop labeling.

Constructing such a tree hierarchy requires effective graph bipartitioning to minimize separator sizes while maintaining balance. Although this problem is NP-hard, recent heuristics have demonstrated excellent scalability and performance on large-scale road networks~\cite{delling2011graph,Schulz13a,farhan2023hierarchical}. In this work, we adopt the recursive bi-partitioning method from~\cite{farhan2023hierarchical,hu2025reproducibility}, which iteratively identifies balanced separators to partition the graph.

A hierarchical vertex order \( \otree \) can be derived from the tree hierarchy \( H_G \). Specifically, $\otree$ is a partial order on the vertices of the road network $G$: for any vertices \( u, v \in V \) with \(f(u)\neq f(v)\) we have \(v \otree u\) iff \(f(v) \in A(f(u))\). Vertices mapped to the same tree node are totally ordered by $\otree$, though this order is arbitrary.

We use $\anc(v)=\{ w\in V \mid w\otree v \}$ to denote the ancestor vertices of $v$.
The \emph{rank} of a vertex is defined as $\tau(v)=|\anc(v)|$. Although different vertices can have the same rank, the ranks of ancestor within $\anc(v)$ are distinct, running from \(1\) to \(\tau(v)\), and will be used as indices within labels. The rank of a tree node is the maximum rank of vertices mapped to it.

\begin{example}
Consider the tree hierarchy shown in Figure~\ref{fig:labeling}. By ordering the vertices within each tree node using their node identifiers, we obtain the hierarchical vertex order $\otree := \{7{\otree}8{\otree}10,3; 10{\otree}12{\otree}\\9,15; 3{\otree}5{\otree}1,6; 9{\otree}14{\otree}11; 15{\otree}13; 1{\otree}2; 6{\otree}4\}$. We also have \(\anc(v_3)=\{v_7,v_8,v_3\}\) and \(\anc(v_5)=\{v_7,v_8,v_3,v_5\}\).
Their ranks are \(\tau(v_3)=3, \tau(v_5)=4\) and the rank of tree node \(f(v_3)\) is \(max(3,4)\). 
\end{example}

\subsubsection{Tree Labeling Scheme}
Based on the tree hierarchy $H_G$, we define the following tree labeling scheme.

\begin{definition}[Tree Labeling Scheme]\label{def:HL}  
Let \(H_G\) be a tree hierarchy over a road network \(G = (V, E)\). A \emph{tree labeling scheme} is a tuple \(L=(H_G, \mathcal{I}, \Tau, C)\), where: 

\begin{itemize}[leftmargin=20pt]  
    \item \underline{\(\mathcal{I} = \{\mathcal{I}(v) \mid v \in V\}\):} Each vertex \(v \in V\) is assigned a bitstring identifying the position of node \(f(v)\) in the tree hierarchy.
    \item \underline{\(\Tau = \{\Tau(v)\mid v\in V\}\):}
    Each $\Tau(v)$ is a \emph{rank array}
    \([\tau(N_1),\dots,\tau(N_k),\tau(v)]\) where \(\{N_1,\dots,N_k,f(v)\}\) are the ancestor tree nodes $A(f(v))$ of $f(v)$.

    \item \underline{\({C} = \{{C}(v) \mid v \in V\}\):} Each \({C}(v)\) is a \emph{cost array}
    \([\delta_{vw_1}, \dots,\delta_{vw_k}]\), where
    \(\anc(v)=\{w_1, \dots, w_k\}\) and $\delta_{vw_i}$ indicates a cost value (not necessarily minimal) for paths between $v$ and $w_i$.
\end{itemize}  
\end{definition}

\begin{figure}[ht!]
    \centering
    \includegraphics[width=0.77\textwidth]{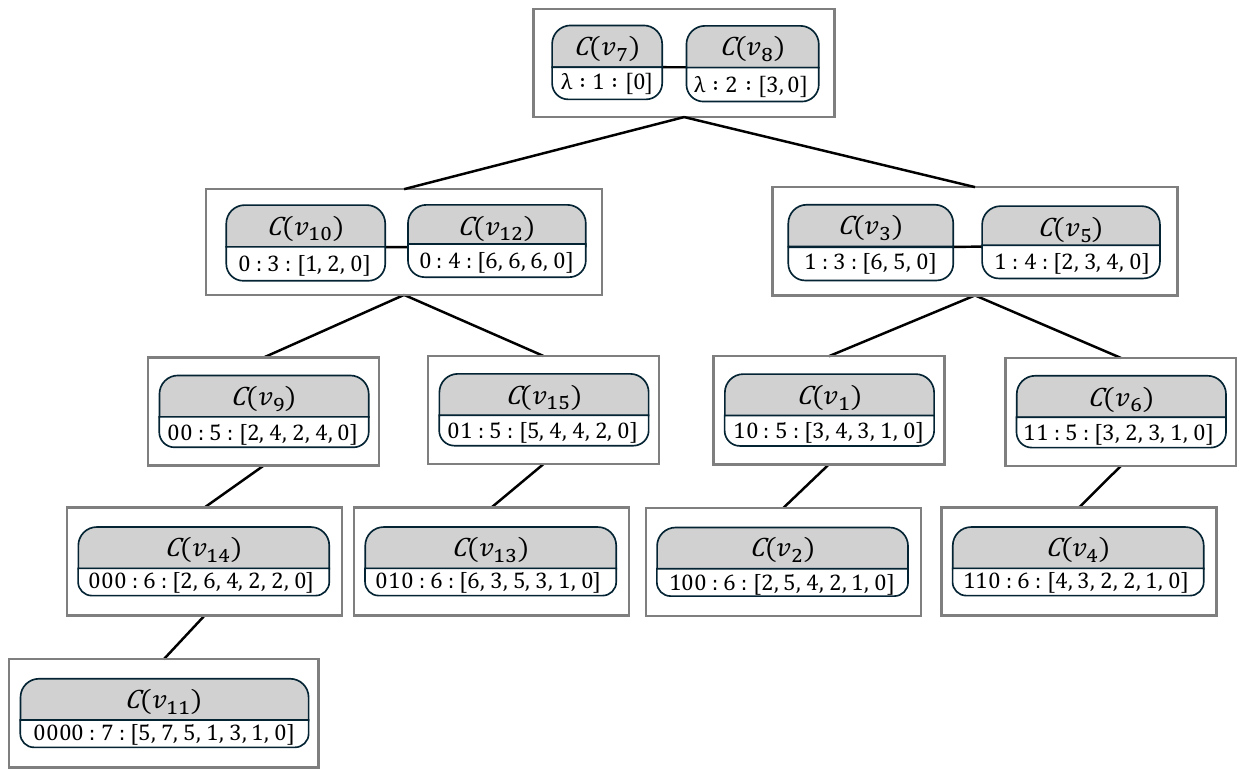}
    \caption{An illustration of a tree hierarchy $H_G$ and tree labeling Scheme \( L \) over $G$, where vertices at the first level are assigned the empty bitstring $\lambda$. We only show $\tau(v)$ in place of $\Tau(v)$ for brevity.}
    \label{fig:labeling}
\end{figure}

\begin{example}
Figure~\ref{fig:labeling} shows a tree hierarchy and tree labeling scheme over an example road network $G$ depicted in Figure~\ref{fig:example_road_network}. The label information for each vertex $v$ is presented under its label $L(v)$ in the format $\mathcal{I}(v) : \tau(v) : \mathcal{C}(v)$. 
Consider vertex $v_{15}$: it is assigned a node identifier $\mathcal{I}(v_{15})=01$ as its node is the right child of the left child of the root. Its ancestors are $\anc(v_{15})=\{v_7, v_8,\ v_{10}, v_{12},\ v_{15}\}$, resulting in a rank array \(\Tau(v_{15})=[2,4,5]\) and a cost array $[5, 4, 4, 2, 0]$ storing distances to its ancestors $\{v_7, v_8, v_{10}, v_{12}, v_{15}\}$.
\end{example}

The tree labeling scheme leverages the structure of the tree hierarchy \(H_G\) to assign labels that encode both topological and cost-related information for efficient query processing. The node identifiers (\(\mathcal{I}\)) together with the rank arrays (\(\Tau\)) enable efficient computation of how many common ancestors two vertices have, and thus which prefix of the cost arrays (\(C\)) to use as hub for 2-hop cost computation.
In fact, \(\mathcal{I}\) and values in \(\Tau\) other than \(\tau(v)\) are only used by \texttt{GetLcaHeight} in Algorithm~\ref{algo:query}.

\subsubsection{Shortcut Graph Scheme}
Fast customization requires efficient path searches. To support this, we utilize a data structure called \emph{shortcut graph scheme}. With the hierarchical vertex order $\otree$ we extend the road network by adding \emph{shortcuts} between vertices where intermediate vertices have higher ranks than the endpoints.

\begin{definition}[Shortcut Graph Scheme]  
Given a road network \( G = (V, E) \) and a tree hierarchy \( H_G \), the \emph{shortcut graph scheme} \( S = (V, E^*, \otree) \) consists of the vertex set $V$, the edge set \( E^* = E \cup E' \) that includes all edges $E$ in \( G \) and shortcuts \( E' \), and a hierarchical vertex order \( \otree \) induced by \( H_G \).  A shortcut \( (v, u) \in E' \) is added if \( (v, u) \notin E \) and there exists a \emph{valley path} \( p \) between \( v \) and \( u \), meaning that \( v, u \otree w \) for all \( w \in V(p) \setminus \{v, u\} \).
\end{definition}

For each vertex \( v \in V \), we denote its \emph{upward neighbors} which ranked \emph{lower} than $v$ as \(     N^+(v) = \{ u \mid (v,u) \in E^*\land u \otree v \} \), and \emph{downward neighbors} which are ranked \emph{higher} than \( v \) as \( N^-(v) = \{ u \mid  (v,u) \in E^* \land v \otree u \} \) in the shortcut graph scheme \( S \).

We construct  
$S$ following the method proposed in~\cite{ouyang2020efficient}, with two key modifications: (1) \emph{Hierarchical vertex  contraction:} Vertices are contracted based on the hierarchical vertex order $\otree$, starting with the highest-ranked vertices (in decreasing order of rank). (2) \emph{Shortcut addition:} Edge costs are not computed (initially).

\begin{figure}[ht!]
    \centering  \includegraphics[width=0.77\textwidth]{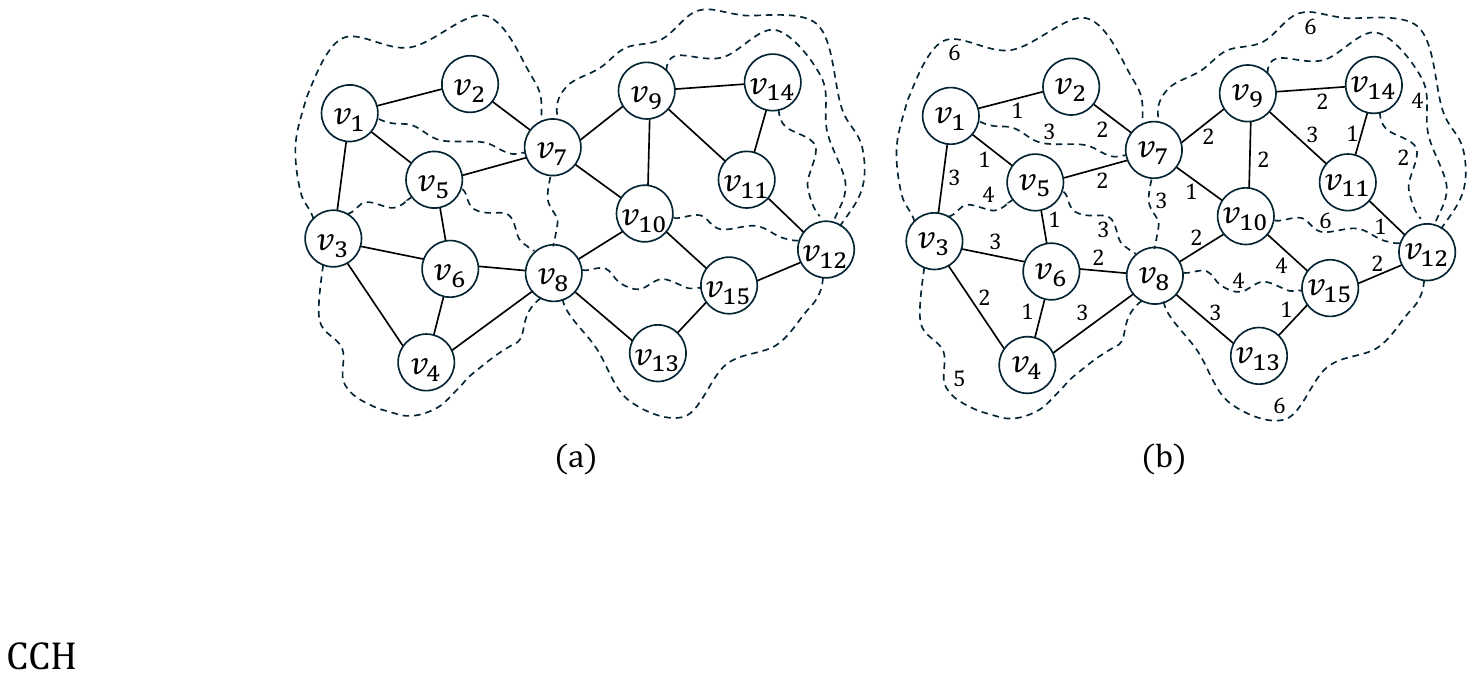}
    \caption{(a) An illustration of shortcut graph scheme; (b) Shortcut graph scheme after customization.}
    \label{fig:sc-graph}
\end{figure}
\begin{example}
Figure~\ref{fig:sc-graph}(a) shows a shortcut graph scheme $S$ over an example road network $G$ shown in Figure~\ref{fig:example_road_network} using the hierarchical vertex order
induced by the tree hierarchy $H_G$ depicted in Figure~\ref{fig:labeling}. Dashed edges represent shortcuts. A shortcut edge $(v_1,v_7)$ in $S$ is added because a valley path $\langle v_1, v_2, v_7\rangle$ between $v_1$ and $v_7$ exist in $G$. Note that $\langle v_1,v_5,v_7\rangle$ is not a valley path since $v_5\otree v_1$. Consider vertex $v_1$, we have \( N^+(v_1) = \{ v_3, v_5, v_7 \} \) and \( N^-(v_1) = \{ v_2 \} \).
\end{example}

\subsection{Metric Customization}

We describe the customization, which integrates a given cost metric 
 \(\omega\) into tree labeling scheme $L$ and shortcut graph scheme $S$ through two sequential steps:  
(1). Customize \( S \) to $S_{\omega}$ by incorporating the edge costs of the road network into all shortcuts.  
(2). Customize $L$ to $L_{\omega}$ by computing the cost array for each \( C(v) \) based on \( S_{\omega} \). 

\subsubsection{Customizing Shortcut Graph Scheme}
The \emph{shortcut customization property} is utilized to customize the costs of shortcuts in $S$.

\begin{property}[Shortcut Customization]\label{lab:weight_property}
Let $S=(V,E^*,\otree)$. For any $(v,u)\in E^*\setminus E$, the cost \(\omega(v,u)\) in \( S_{\omega} = (V, E^*, \omega, \otree)~\) satisfies: 
\begin{align*}\label{eq:weight_property}
\omega(v,u)=\min\{\omega(w, v)+\omega(w, u)\mid w\in N^-(v)\cap N^-(u) \}.
\end{align*} 
\end{property}

Specifically, the shortcuts in \( S \) are customized in descending order of \( v \) with respect to \( \otree \), as described in Algorithm~\ref{algo:custom-ch}. For each upward neighbor \( u \in N^+(v) \), the algorithm iterates over the common downward neighbors \( w \) of \( v \) and \( u \). It checks if the path 
between $v$ and $u$ through $w$ in $S$ has a lower cost than the direct path between $v$ and $u$.
If so, it updates \(\omega(v, u)\) to reflect the lower cost between \( v \) and \( u \) in \( G \) (Lines 5–7), based on Property~\ref{lab:weight_property}. By construction, \( w \) has a strictly higher rank than \( v \) and \( u \), ensuring that the costs \(\omega(v, w)\) and \(\omega(w, u)\) are already computed before \(\omega(v, u)\) is updated. 

\begin{example}
Figure~\ref{fig:sc-graph}(b) shows shortcut graph scheme \( S \) after metric customization, with weights assigned based on the metric in Figure~\ref{fig:example_road_network}. Consider the following shortcuts in the order processed by Figure~\ref{algo:custom-ch}:
the cost of $(v_{12},v_{14})$ is computed as $\omega(v_{12},v_{14})=\omega(v_{12},v_{11})+\omega(v_{11},v_{14})=2$ using the triangle $\langle v_{12},v_{11},v_{14}\rangle$. This is then used to compute $\omega(v_{12},v_9)=\omega(v_{12},v_{14})+\omega(v_{14},v_{9})$ $=4$ and finally $\omega(v_{12},v_7)=\omega(v_{12},v_9)+\omega(v_9,v_7)=6$.
\end{example}

\begin{algorithm}[t]
\caption{Customizing Shortcut Graph}\label{algo:custom-ch}
\SetCommentSty{textit}
\SetKwFunction{FMain}{CustomizeS}
\SetKwProg{Fn}{Function}{}{end}
\SetKw{and}{and}
\Fn{\FMain{$G$, $S$}}{
    $S_\omega\gets S$
    
    \noindent\tcp{initialize new edge costs for $S_\omega$}
    \ForEach{$(u, v) \in E$}{
          $\omega(u, v)\gets\omega_G(u, v)$
    }

    \tcp{customize shortcut costs}
    \ForEach{$v \in V$ in descending order of $\otree$}{
        \ForEach{$u\in N^+(v)$}{
            
            \ForEach{$w\in N^-(v)\cap N^-(u)$}{
                $\omega(v, u)\gets\min\{\omega(v, u), \omega(w, v) + \omega(w, u)\}$\\
            }
        }
    }
}
\end{algorithm}
\begin{algorithm}[t]
\caption{Customizing Tree Labeling}\label{algo:label-construct}
\SetCommentSty{textit}
\SetKwFunction{FMain}{CustomizeL}
\SetKwProg{Fn}{Function}{}{end}
\SetKw{and}{and}
\Fn{\FMain{$S_\omega$}}{
    \tcp{initialize}
    \ForEach{$v\in V$}{
        $C(v)\gets[\infty,\dots,\infty]$, $\delta_{vw}\gets 0$, s.t., $w=\tau(v)$\\
    }
    \tcp{customize label distances}
    \ForEach{$v\in V$ in ascending order of $\otree$}{
        \ForEach{$u\in N^+(v)$}{
            \ForEach{$w$ s.t. $w\preceq u$\label{L:ch-construct-iter-anc}}{
                $\delta_{vw}\gets\min(\delta_{vw},\; \omega(v,u) + \delta_{uw})$\\
            }
        }
    }
}
\end{algorithm}

\subsubsection{Customizing Tree Labeling Scheme} 
To customize the cost array \([\delta_{vw_1}, \dots, \delta_{vw_k}]\) for each \( C(v) \), we apply the following \emph{label customization property}. This property ensures efficient computation of costs  using the hierarchical structure of $S_{\omega}$.

\begin{property}[Label Customization]\label{eq:label_property}
Let \( S_{\omega} = (V, E^*, \omega, \otree)~\). For any \( v, w \in V \) with \( w \otree v \), each cost entry in \( C(v) \) satisfies:  
\begin{align*}  
    \delta_{vw_i} = \min_{u \in N^+(v)} \{ \omega(v, u) + \delta_{uw_i} \}, \quad \text{for } 1 \leq i \leq k.  
\end{align*}  
\end{property}

We customize the cost array \( C(v) \) for each vertex \( v \in V \) in ascending order of \( \otree \) using a top-down approach, as detailed in Algorithm~\ref{algo:label-construct}. Each cost array \( C(v) \) is initialized to $\infty$, i.e., \( \delta_{vw} = \infty \) for all \( w \neq v \) and to self as $0$ (Lines 2-3). For each \( u \in N^+(v) \), we iterate over all $\{w\mid w\otree u\}$. If a path with lower cost is found, i.e., \( \delta_{vw} > \omega(v, u) + \delta_{uw} \), we update \( \delta_{vw} \) with \( \omega(v,u) + \delta_{uw} \). This operation leverages Property~\ref{eq:label_property}. A key observation is that when processing \( v \), the labels for all vertices $\{w\mid w\otree v\}$ have already been processed and thus can be used together with its upward neighbors to compute $C(v)$. 

\begin{example}
Consider Figures~\ref{fig:labeling} and~\ref{fig:sc-graph}(b). Suppose the costs for all vertices with ranks lower than 5, namely \(\{v_7, v_8, v_{3}, v_{5}\}\), have already been customized. We now customize the cost array \(C(v_1)\) for vertex \(v_1\).
From Figure~\ref{fig:sc-graph}(b), we know \(N^+(v_1) = \{v_7, v_5, v_3\}\).
In Algorithm~\ref{algo:label-construct}, we first consider \( v_7 \in N^+(v_1) \) and iterate over \( \{v_7\} \preceq v_7 \) (Line 6). At Line 7, since \( \omega(v_1, v_7) = 3 \) and \( \delta_{uw} = 0 \) with \( u = w = v_7 \), we compute \( \delta_{vw} = 3 \), where \( v = v_1 \) and \( w = v_7 \). Next, for \( v_5 \in N^+(v_1) \), we iterate over \( \{v_7, v_8, v_3, v_5\} \preceq v_5 \) (Line 6), applying Line 7 to update costs from \( v_1 \) to these vertices. Similarly, for \( v_3 \in N^+(v_1) \) we iterate over \( \{v_7, v_8, v_3\} \preceq v_3 \) (Line 6) and apply Line 7 to update the corresponding costs.
\end{example}

\subsection{Query Processing}

We discuss how to efficiently answer route queries using the customized tree labeling \(L_{\omega}\). This process leverages tree labels to estimate costs and restrict the search to relevant paths, thereby significantly enhancing query efficiency. The hierarchical structure of tree labels is central to this optimization.

Let \(s, t \in V\) be any two vertices and \(CA(s,t)\) their common ancestors as per Definition~\ref{def:td}. Due to the \emph{ancestor separation} property, the optimal cost between \(s\) and \(t\) can be computed as:
\[
d_G^{\omega}(s,t) = \min\{ d_G^{\omega}(s,r) + d_G^{\omega}(t,r) \mid r \in CA(s,t)\}.
\]

Note that our cost arrays do not store distances in $G$.
Instead, it can be shown that $\delta_{vw}=C(v)[\tau(w)]$ is computed as the distance between $v$ and $w$ in the subgraph induced by the vertices in the sub-tree rooted in $w$.
Using the same arguments as in \cite{qiu2022efficient} which takes a similar approach, one can show that
\[
d_G^{\omega}(s,t) = \min\{ \delta_{sr} +  \delta_{tr} \mid r \in \anc(s)\cap\anc(t)\}.
\]

Observe that the cost values used in this are found at the first $h$ positions of $C(s),C(t)$, where $h=|\anc(s)\cap\anc(t)|$. In Algorithm~\ref{algo:query} we compute $h$ as the rank of the lowest common ancestor $l_{st}$ of $s,t$. First we proceed as in \cite{farhan2023hierarchical} and obtain the level of its tree node $f(l_{st})$ as the length of the common prefix of the identifiers $\mathcal{I}(s),\mathcal{I}(t)$, then look up its rank in either of the rank arrays $\Tau(s),\Tau(t)$.

\begin{algorithm}[t]
\caption{Query optimal routes}\label{algo:query}
\SetCommentSty{textit}
\SetKwFunction{FMain}{FindRoute}
\SetKwFunction{FLca}{GetLcaHeight}
\SetKwFunction{FCost}{GetCost}
\SetKwProg{Fn}{Function}{}{end}
\Fn{\FMain{$s,t$}}{
	$h\gets$ \FLca{$s,t$}\\
	$C(s)\gets$ \FCost{$s,h$}\sep
	$C(t)\gets$ \FCost{$t,h$}\\
	\KwSty{return} $\min_{i=1}^{h} C(s)[i] + C(t)[i]$\\
}
\end{algorithm}

\begin{algorithm}[b]
\caption{Cost computation
}\label{algo:label-full}
\SetCommentSty{textit}
\SetKwFunction{FMain}{GetCost}
\SetKwProg{Fn}{Function}{}{end}
\Fn{\FMain{$v,h$}}{
	\If{$C(v)$ is not truncated}{
		\KwSty{return} $C(v)$\\
	}
	\tcp{initialize cost and ancestor arrays of length $\tau(v)$}
	$c\gets [\infty,\dots,\infty]$\sep
	$a\gets [\bot,\ldots,\bot]$\\
	\tcp{follow $S_\omega$ shortcuts upwards while labels are truncated}
	$c[\tau(v)]\gets 0$\sep
	$a[\tau(v)]\gets v$\\
	\For{$\tau(w)=\tau(v)$ down to $1$}{
		$w\gets a[\tau(w)]$\\
		\If{$w=\bot$}{
			\KwSty{continue}\label{ln:cc-continue}\\
		}
		\uIf{$C(w)$ is not truncated} {
			\tcp{update $c$ using $C(w)$}
			\For{$i\in[1,\min(\tau(w) - 1,h)]$\label{ln:cc-loop}}{
				$d\gets c[\tau(w)] + C(w)[i]$\\
				\If{$d\leq c[i]$}{
					$c[i]\gets d$\sep
					$a[i]\gets\bot$\label{ln:cc-loop-end}\\
				}
			}
		}
		\Else{\label{ln:cc-else}
			\tcp{follow $S_\omega$ shortcuts}
			\ForEach{$(n,\omega)\in N^+(w)$}{
				$d\gets c[\tau(w)] + \omega$\\
				\If{$d<c[\tau(n)]$}{
					$c[\tau(n)]\gets d$\sep
					$a[\tau(n)]\gets n$\label{ln:cc-else-end}\\
				}
			}
		}
	}
	\KwSty{return} $c$
}
\end{algorithm}

\begin{figure}
    \centering
    \includegraphics[width=0.6\textwidth]{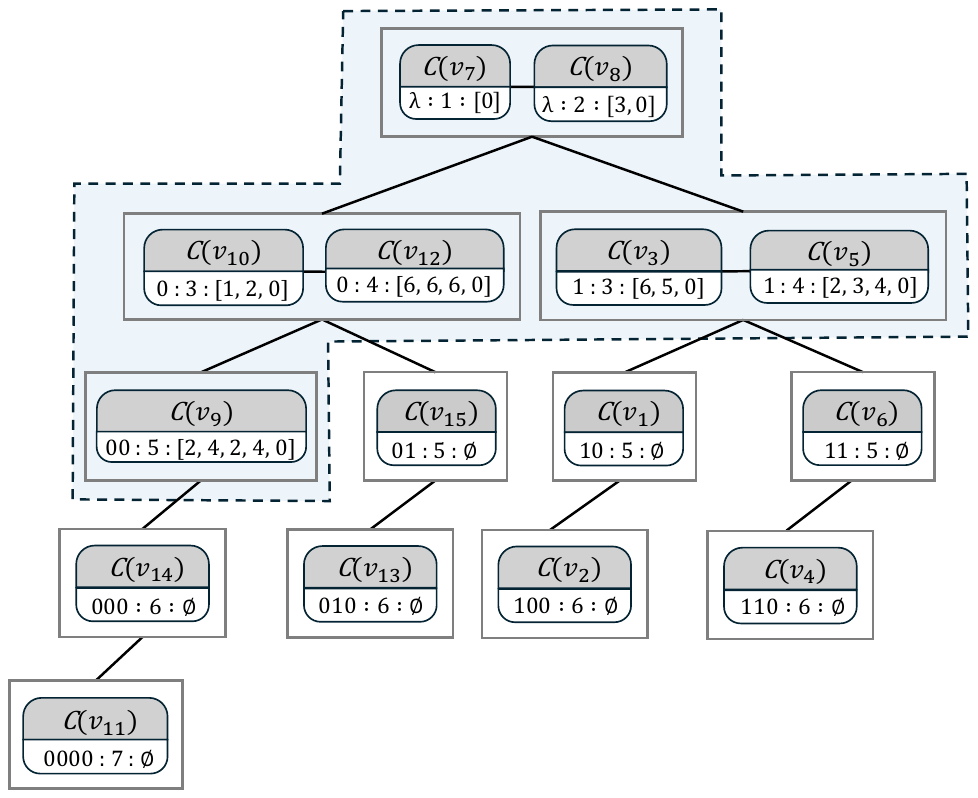}
    \caption{Parameterized labeling \( L^{\theta}_{\omega} \) with \(  \theta=2\), where vertices at the first level are assigned the empty bitstring $\lambda$.}
    \label{fig:parametric}
\end{figure}

\section{Parametric Customization}\label{section:parameterized}
Balancing customization and query performance is crucial for routing systems to support a wide range of applications, from real-time navigation to logistics planning. The tree labeling method introduced in Section~\ref{section:tree-labeling} enables efficient queries, but computing full labels for every vertex can be unnecessarily expensive. We observe that vertices higher in the hierarchy tend to be more central and are involved in queries more frequently, making full labeling beneficial. In contrast, lower-level vertices contribute less and require only minimal labeling. To leverage this distinction, we introduce a parameterized labeling approach that adjusts the level of labeling based on the hierarchical importance of each vertex and enables controllable trade-offs between
customization and query efficiency.

The labels in $L_{\omega}$ serve as pre-computed optimal routes, avoiding the need for upward searches in $S_{\omega}$.
The core idea is to combine \emph{partial} tree labels in the customized tree labeling \( L_{\omega} \) and shortcuts in the customized shortcut graph \( S_{\omega} \) to answer route queries. A higher degree of labeling in \( L_{\omega} \) improves query performance but increases customization time and label size. Conversely, relying more on searching within \( S_{\omega} \) enables faster customization and smaller label sizes but slows query responses. This interplay between \( L_{\omega} \) and \( S_{\omega} \) adds design complexity, requiring careful consideration to ensure accurate routing with minimal computational overhead.

A key challenge here is how exactly to utilize preserved labels when answering queries for lower-level vertices whose labels have been truncated. Although it is possible to simply use our shortcut graph scheme as a fall-back whenever labels are unavailable, such an approach would result in an unacceptable degradation of query performance -- close to that of plain contraction hierarchies, while incurring additional space and customization overheads.

\vspace{0.1cm}
\newcommand{\plabel}{L^{\theta}_{\omega}}
\newcommand{\pcost}{C^{\theta}}
\subsubsection{Parameterized Labeling}
A \emph{parameterized labeling}, denoted as \( \plabel \)  where $\theta\in \mathbb{N}_0$ is a non-negative integer, modifies \( L_{\omega} \) by altering its cost arrays, truncating those near the bottom of the hierarchy. Let $desc(v)=\{w\in V\mid v\otree w\}$ denote the descendants of $v$.
The cost arrays in \( \plabel \) are defined as \( \pcost = \{ \pcost(v) \mid v\in V, |desc(v)| > \theta \} \). Essentially, labels in $\plabel$ are retained unchanged ($\pcost(v)=C(v)$) at the upper levels of the tree hierarchy, but \emph{truncated} ($\pcost(v)=\emptyset$) at the lower levels.

This design of parameterized labeling is further supported by several observations: (1) \emph{Local searches:} Vertices at lower levels tend to have few upward neighbors in $S_{\omega}$, limiting the effort required for local searches in $S_{\omega}$. (2) \emph{Vertex distribution:} Most vertices appear in lower levels, so truncating them reduces label size significantly. (3) \emph{Label sizes:} Cost arrays make up the bulk of label sizes. Vertices at lower levels have more ancestors, leading to larger labels.

\begin{example}
Consider Figure~\ref{fig:labeling}, where vertices in dark blue represent untruncated vertices, while the vertices in light blue are truncated for \( \theta=2\). Consider for example vertex \(v_{12}\): we have \( desc(v_{12})=\{v_9,v_{11},v_{12},v_{13},v_{14},v_{15}\} \) of size $>\theta$,
so its full label information consisting of \(\mathcal{I}(v_{12})=0, \Tau(v_{12})=[2,4]\) and \(C(v_{12})=[6,6,6,0]\) is preserved. In contrast, for vertex \(v_1\) we have \( desc(v_1)=\{v_1,v_2\} \) of size $\leq\theta$,
so its cost array is not stored. We still maintain \(\mathcal{I}(v_1)=10\) and \(\Tau(v_1)=[2,4,5]\).
\end{example}

\subsubsection{Integrated Querying}
Using a parameterized labeling \( \plabel \), we propose an integrated query algorithm that combines shortcut exploration and label-based cost lookup. For cases where costs $C(s)$ and/or $C(t)$ have been truncated due to parameterization, the algorithm computes these costs using both $S_{\omega}$ and $\plabel$. Once computed, $C(s)$ and $C(t)$ can be used to find an optimal route as before. A high-level description is provided in Algorithm~\ref{algo:query}. In Line 3, we pass the height $h$ of the lowest common ancestor to Algorithm~\ref{algo:label-full}, since we only require the first $h$ cost values and will use this fact to limit computational work.

At the core of integrated querying lies cost computation, as described in Algorithm~\ref{algo:label-full}. For a query vertex $v$ whose cost has been truncated (i.e., $\pcost(v)=\emptyset$), we perform a combination of online and offline searches with online search being conducted on the shortcut graph $S_{\omega}$ and offline search on the truncated hub labeling $\plabel$. We start by initializing a cost array $c$ and an ancestor array $a$ (Lines 4--5). We then perform an upward search on $S_{\omega}$ until we reach a vertex whose cost has not been truncated (Lines \ref{ln:cc-else}--\ref{ln:cc-else-end}). From here, we replace the remaining steps of the upward search with more efficient cost lookups in $\plabel$ (Lines 10--14). Observe that in Lines \ref{ln:cc-loop}--\ref{ln:cc-loop-end} we do not update costs to all ancestors, but only for those up to height $h$, as these are the only ones used by Algorithm~\ref{algo:query}. Note further that we may skip distance updates for an ancestor $w=\bot$ in Algorithm~\ref{ln:cc-continue} since in those cases $w$ was not reached through upward search in $S_{\omega}$, and while the ancestor at rank $\tau(w)$ (whose identity is unknown) may well appear on a shortest path $p$ from $v$ to another ancestor, this path must have been considered previously when processing $C(w')$, where $w'$ is the first untruncated ancestor of $v$ in $p$.
The deletion of ancestors in Algorithm~\ref{ln:cc-loop-end}, which is done to prevent needless processing, can be justified in the same way.

If $\theta=0$ or the costs of a query pair $(s,t)$ have not been truncated (i.e., $\pcost(s)\neq\emptyset$ and $\pcost(t)\neq\emptyset$), integrated querying is reduced to querying over the full labeling $L_{\omega}$.
Conversely, for $\theta=\infty$, integrated querying reduces to bidirectional search over $S_{\omega}$ as in CCH.

\begin{example}
Consider a query $(v_{14},v_{13})$, which is processed using the parameterized labeling $\plabel$ shown in Figure~\ref{fig:labeling}.
Since both $C^{\theta}(v_{14})$ and $C^{\theta}(v_{13})$ are truncated, their costs will be computed up to height $h=3$, the rank of their lowest common ancestor $v_{12}$. For $v=v_{14}$ in Figure~\ref{algo:label-full}, the computation begins by following the upwards neighbors $\{v_9,v_{12}\}$ of $v_{14}$ in $S_{\omega}$ as shown in Figure~\ref{fig:sc-graph}(b). The cost and ancestor arrays are then updated to $c=[\infty,\infty,\infty,2,2,0]$ and $a=[\bot,\bot,\bot,v_{12},v_{9},v_{14}]$ in Lines \ref{ln:cc-else}--\ref{ln:cc-else-end}.
In the next iteration ($w=v_9$) we find that $C(v_9)$ is not truncated, and update the distance array to $c=[4,6,4,2,2,0]$ in Lines \ref{ln:cc-loop}--\ref{ln:cc-loop-end}. This is repeated for $w=v_{12}$ but does not cause any changes to $c$. For ancestors $w\in\{v_7,v_8,v_{10}\}$, $a[\tau(w)]$ is unknown ($\bot$), so no further updates are performed. The process of cost computation for \(C(v_{13})\) is similar, with the key difference being that \(C^{\theta}(v_{15})\) is truncated. As a result, upward search in \(S_{\omega}\) is performed during the second iteration. This leads to \( a = [\bot, v_8, v_{10}, v_{12}, v_{15}, v_{13}] \) being almost completely known, with no iterations skipped.
\end{example}

\section{Shortest Path Identification}\label{section:routing}

The integrated querying described in Section~\ref{section:parameterized} identifies the hub rank $i^\star = \tau(x)$ that minimizes the total cost of a route between two vertices. While this suffices to compute distances, it does not reveal the actual path or even the identity of the hub vertex $x$. In the following we extend CTL to return shortest paths rather than distances. For this we introduce additional structures that enable efficient path reconstruction.

\subsection{Basic Valley-Path Unpacking Approach}
Our basic approach for unpacking shortcuts into valley paths follows a similar principle to \cite{dibbelt2016customizable}, but is adapted to handle both truncated and untruncated settings.  
For every shortcut $(u,v)$ we store a triangle node $w$ such that $d(u,w) + d(w,v) = d(u,v)$.  
Such a node $w$ -- if it exists -- is determined during shortcut construction, which explicitly relies on these triangles.  
If no such triangle node exists because $(u,v) \in G$, we store a \texttt{NAN} (not-a-node) identifier.  
Unpacking then proceeds recursively by replacing each shortcut with the corresponding valley path.

\subsection{Path Reconstruction}
Building on the shortcut-level unpacking above, we now consider how to reconstruct a complete path between two query vertices $s$ and $t$.  
An optimal path can be viewed as a chain of valley paths: from $s$ up to its hub node, from $t$ up to its hub node, and finally connecting at the hub itself.  
Since each valley path is represented by a shortcut, full path reconstruction reduces to recursively expanding these shortcuts.
During the initial truncated phase of a query, the sequence of shortcuts is constructed explicitly and tracked via \emph{endpoint predecessor maps}, which directly provide the intermediate vertices.  
In contrast, during the untruncated phase, each label entry may encode an entire sequence of shortcuts without exposing ancestor identifiers.  
To enable unpacking in this case, we introduce \emph{path arrays}, precomputed during customization, which store the necessary endpoint information to expand these shortcut sequences.

\begin{definition}[Path Array]
For each vertex $w$ with an untruncated label, we maintain a \emph{path array}
\[
P(w)[i], \quad 1 \leq i < \tau(w),
\]
where $P(w)[i]$ denotes the first shortcut endpoint on an optimal $\omega$-path from $w$ to its ancestor of rank $i$.
\end{definition}

Path arrays are built alongside the cost arrays $C(w)$ during customization: whenever a cost entry $C(w)[i]$ is updated through a shortcut $(w,u)$, we set $P(w)[i]=u$. This doubles the space requirement of untruncated labels but does not noticeably increase customization time, since the updates follow the same control flow as cost propagation.

While path arrays handle the untruncated part of the query (Algorithm~\ref{algo:label-full}, Lines 10--14), the truncated part (Algorithm~\ref{algo:label-full}, Lines 15--19) explicitly follows shortcuts in $S_\omega$. In this case, endpoints can be recorded directly during query execution.

\begin{definition}[Endpoint Predecessor Map]
For each query endpoint $v\in\{s,t\}$ we maintain a temporary array $\mathrm{EP}_v[\cdot]$ indexed by ranks. Whenever a shortcut $(w,n)$ is relaxed during the truncated phase, we set $\mathrm{EP}_v[\tau(n)] = w$.
\end{definition}

The two structures complement one another: the truncated part of the query produces explicit endpoint information, while the untruncated part can be reconstructed via path arrays. Combining both yields complete endpoint chains.

\begin{lemma}[Endpoint Chains]\label{label:endpoint_chain}
Let $i^\star=\tau(x)$ be the hub rank identified by the integrated query. Then, for each query vertex $v$, one can construct a strictly rank-decreasing sequence $v = e_0, e_1, \ldots, e_k = x$ such that every $(e_j,e_{j+1})\in S_\omega$ is a shortcut lying on an optimal path. Specifically:
\begin{itemize}
    \item If $e_{j}$ was reached during the truncated phase, then $e_{j-1}=\mathrm{EP}_v[\tau(e_{j})]$.
    \item If $e_{j}$ is not truncated, successors are obtained from the path array, i.e., $e_{j+1}=P(e_j)[i^\star]$.
\end{itemize}
\end{lemma}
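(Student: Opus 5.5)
The chain is assembled in two segments, each from its own local structure: a lower segment from the truncated phase of Algorithm~\ref{algo:label-full}, read backwards through $\mathrm{EP}_v$, and an upper segment from the untruncated phase, read forwards through the path arrays $P$. The two segments meet at the first untruncated vertex through which the optimal cost $c[i^\star]$ was obtained. Each step is justified by an invariant saying that every stored value is realized by an actual valley-path decomposition.

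\emph{Invariants.} First I would prove, by induction on the order in which ranks are processed in Algorithm~\ref{algo:label-full} (from $\tau(v)$ downward), an invariant for the truncated phase. Whenever $a[\tau(n)]=n\neq\bot$, the value $c[\tau(n)]$ equals the cost of some chain $v=e_0,\dots,e_m=n$ of $S_\omega$-edges with strictly decreasing ranks, and $\mathrm{EP}_v[\tau(e_{j})]=e_{j-1}$ for every $j\ge 1$. This holds because the entries $c[\tau(n)]$, $a[\tau(n)]$ and $\mathrm{EP}_v[\tau(n)]$ are written together at Lines~\ref{ln:cc-else}--\ref{ln:cc-else-end} whenever $c[\tau(n)]$ strictly decreases. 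By the upward property of $N^+$, each relaxation goes from the higher rank $\tau(w)$ to the lower rank $\tau(n)$. I must also check that a later overwrite of $\mathrm{EP}_v[\tau(n)]$ keeps the chain consistent. It does, since the predecessor $w$ was already finalized: its rank has already been processed, so $c[\tau(w)]$ no longer changes.

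Next, for untruncated $w$, I would use the customization invariant from Property~\ref{eq:label_property} and Algorithm~\ref{algo:label-construct}. Recall that $P(w)[i]$ is set together with $C(w)[i]$. Hence if $u=P(w)[i]$, then $C(w)[i]=\omega(w,u)+C(u)[i]$ with $u\in N^+(w)$, so $\tau(u)<\tau(w)$. Moreover, $u$ is an ancestor of $w$, and it is untruncated because ancestors have more descendants. Iterating therefore stays among untruncated vertices and strictly decreases rank. The ancestor of rank $i$ is reached when rank $i$ is hit, and the convention $C(x)[\tau(x)]=0$ stops the iteration.

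\emph{Gluing.} Let $c[i^\star]$ be the minimizing entry for $v$. If $a[i^\star]=x$ was set by the truncated phase and never reset to $\bot$, the first invariant yields the whole chain via $\mathrm{EP}_v$. Otherwise, the last strict improvement of $c[i^\star]$ happened at Lines~\ref{ln:cc-loop}--\ref{ln:cc-loop-end} for some untruncated $w$ with $c[i^\star]=c[\tau(w)]+C(w)[i^\star]$. The lower segment $v\to w$ then comes from $\mathrm{EP}_v$, because $w$ had $a[\tau(w)]=w$ when processed. The upper segment $w\to x$ comes from iterating $P(\cdot)[i^\star]$. To decide which case applies I need to know which $w$ produced the final value, so I would have the reconstruction record, or recompute, the untruncated vertex $w$ that attains $c[i^\star]$. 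This is the only added bookkeeping. A fully untruncated $v$ is the special case $w=v$.

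\emph{Optimality.} Finally, the concatenated chain has total cost $c[i^\star]$, and $c[i^\star]+c_t[i^\star]=d^\omega_G(s,t)$ by the correctness of integrated querying. So the chain is a prefix of an optimal $s$--$t$ path, and each of its shortcuts lies on an optimal path. The expected obstacle is the interaction of the $\le$ comparison and the $a[i]\gets\bot$ resets in Lines~\ref{ln:cc-loop}--\ref{ln:cc-loop-end} with the $\mathrm{EP}$ records. I must show that the gluing vertex $w$ is well defined and that the entry $\mathrm{EP}_v[\tau(w)]$ is not stale. I would handle this by fixing $w$ as the last untruncated vertex that updated $c[i^\star]$, and by noting that $\mathrm{EP}_v[\tau(w)]$ is frozen once rank $\tau(w)$ has been processed.
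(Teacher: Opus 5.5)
Your proposal is correct and follows the paper's route, which the paper itself only sketches: glue at the first untruncated vertex $w$, recover $v\to w$ backwards through $\mathrm{EP}_v$, and recover $w\to x$ forwards through $P(\cdot)[i^\star]$. The paper stores this glue vertex in $\mathrm{EP}_v[i^\star]$ itself (implicitly also writing $\mathrm{EP}_v$ during the label lookups of Lines~\ref{ln:cc-loop}--\ref{ln:cc-loop-end}), whereas you record or recompute it, and your invariants plus the separate all-truncated case make that sketch rigorous.

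One small correction: the gluing should rest on the last \emph{write} to $c[i^\star]$, not the last strict improvement. With the $\le$ test, an equal-valued label update can follow a truncated-phase improvement and reset $a[i^\star]$ to $\bot$; your closing remark about the last untruncated updater already handles this.
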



Note that the first untruncated vertex $e_j=EP[\tau(x)]$ on a shortest path from $v$ to $x$ meets both conditions of Lemma~\ref{label:endpoint_chain}.
We can thus construct the remaining endpoints by applying the Lemma in both directions.
This is sketched in Algorithm~\ref{algo:label-endpoint-chain}.

Once endpoint chains have been computed for both $s$ and $t$, they converge at the hub vertex $x$. Concatenating them yields a path in the shortcut graph $S_\omega$. To obtain the actual route in the original graph $G$, each shortcut is recursively unpacked. This approach is sketched in Algorithm~\ref{algo:path_reconstruction}.

\begin{algorithm}[t]
\caption{Construct endpoint chain}
\label{algo:label-endpoint-chain}
\SetCommentSty{textit}
\SetKwFunction{FMain}{EndpointChain}
\SetKwProg{Fn}{Function}{}{end}
\Fn{\FMain{$v,i^\star$}}{

    $\mathcal{E}_v \gets [\,]$\;
    $e \gets \mathrm{EP}_v[i^\star]$\;
    \While{$e\neq v$}{
        $e\gets\mathrm{EP}_v[\tau(e)]$\;
        append $e$ to $\mathcal{E}_v$\;
    }
    reverse($\mathcal{E}_v$)\;
    $e \gets \mathrm{EP}_v[i^\star]$\;
    append $e$ to $\mathcal{E}_v$\;
    \While{$\tau(e) \neq i^\star$}{
        $e \gets P(e)[i^\star]$\;
        append $e$ to $\mathcal{E}_v$\;
    }
    \Return{$\mathcal{E}_v$}\;
}
\end{algorithm}

\begin{algorithm}[t]
\caption{Path reconstruction for CTL}
\label{algo:path_reconstruction}
\SetCommentSty{textit}
\SetKwFunction{FMain}{GetPath}
\SetKwProg{Fn}{Function}{}{end}
\Fn{\FMain{$s, t$}}{
$(d^\star,i^\star) \gets$ \textsc{IntegratedQuery}$(s,t)$\\
$\mathcal{E}_s \gets$ \textsc{EndpointChain}$(s,i^\star)$\\
$\mathcal{E}_t \gets$ \textsc{EndpointChain}$(t,i^\star)$\\
$\mathcal{E}_{st} \gets$ concatenate $\mathcal{E}_s$ with reversed $\mathcal{E}_t$ at hub $x$\\
$p_{st} \gets [s]$ \tcp{init $s$--$t$ path}
\ForEach{shortcut $(u,v)$ in $\mathcal{E}_{st}$}{
    $\mathcal{P}_{uv} \gets$ recursively unpack $(u,v)$\\
    Append $\mathcal{P}_{uv}[2:\,]$ to $p_{st}$ \tcp{skip first node to avoid duplication}
}
\Return{$(p_{st},d^\star)$}\;
}
\end{algorithm}

\vspace{0.15cm}
\noindent\textbf{Complexity Analysis.} Path queries preserve the efficiency of distance queries. The asymptotic cost of computing distances remains unchanged. Endpoint reconstruction requires only $O(h)$ additional steps per side, where $h$ is the LCA height, and unpacking is output-sensitive in the path length. Storage increases by a factor of two for untruncated labels due to path arrays, while customization time is essentially unaffected.

\subsection{Extended Valley-Path Unpacking Approach}
\label{subsec:extended-valley-unpacking}
In the basic path unpacking approach, once an optimal hub rank is identified, the path query reconstructs a sequence of shortcuts on the optimal route and then recursively unpacks each shortcut $(u,v)\in S_\omega$ using its downward triangle node $z$, replacing $(u,v)$ by $(u,z)$ and $(z,v)$ until only base edges remain. While correct, this recursive unpacking frequently dominates query time: valley paths can be long, and traversing them recursively incurs repeated lookups and pointer chasing.

A natural alternative is to store the full valley path for each shortcut explicitly. However, this has two drawbacks: (i) it significantly increases memory usage, as most shortcuts would need to store several intermediate nodes, and (ii) the number of stored nodes varies, complicating memory layout and access. 

We therefore introduce a hybrid scheme that balances space and time efficiency. Short valley paths are stored explicitly in-line, while long valley paths are represented in unpacked form, though in addition to the triangle node we also store pointers to the data needed to unpack them. This retains fixed storage per shortcut, provides fast access for many paths, and reduces recursion depth for the remaining ones.

\medskip
\noindent\textbf{Data structure.}
Our storage structure is illustrated in Figure~\ref{fig:path-data}. Each shortcut record occupies a fixed block of six 32-bit words (24 bytes). These words are interpreted in one of two ways:

\begin{itemize}
    \item \emph{Triangle mode.} The first word is set to \texttt{NAN}, a reserved value not corresponding to any valid node ID. The next word stores the downward triangle node $z$, and the remaining four words are used as two 64-bit pointers to the records of the child shortcuts $(u,z)$ and $(z,v)$. This corresponds to the classical recursive unpacking approach, but with direct pointers to child records to reduce dictionary lookups.
    \item \emph{Inline-path mode.} If the first word is not \texttt{NAN}, then the six words are interpreted as an array of up to six intermediate nodes of the valley path. Endpoints $u,v$ are implicit. Paths shorter than length six are padded with \texttt{NAN} values. To disambiguate an inline empty path (whose first word would also be \texttt{NAN}), the second word is checked: if it is also \texttt{NAN}, the record encodes an empty list.
\end{itemize}

This fixed-layout approach is realized in our implementation with a \texttt{union} structure in C++. Examination of the first two words suffices to decide which interpretation applies. In practice, the union avoids space overhead, since the triangle mode already requires alignment padding for two 64-bit pointers. Hence storage costs do not increase over the baseline.

\begin{example}
Consider the following path records:
\begin{center}
\begin{enumerate}
\item \texttt{NAN | 006 | 278 | 334 | 456 | 598}
\item \texttt{006 | 005 | NAN | NAN | NAN | NAN}
\item \texttt{NAN | NAN | NAN | NAN | NAN | NAN}
\end{enumerate}
\end{center}
The first case encodes unpacking information, consisting of triangle node $v_6$ and two 64-bit pointers 278334 and 456598.
Cases two and three show intermediate node lists of length 2 and 0 for the shortcuts $v_3-v_7$ and $v_5-v_7$ in Figure~\ref{fig:sc-graph}, respectively.
\end{example}

\begin{figure}[t]
\centering
\newcolumntype{C}{>{\centering\let\newline\\\arraybackslash\hspace{0pt}}m{9mm}}
\begin{tabular}{|C|C|C|C|C|C|}
\hline
\texttt{NAN} & node & \multicolumn{2}{c|}{path-record ptr} & \multicolumn{2}{c|}{path-record ptr} \\
\hline
node? & node? & node? & node? & node? & node? \\
\hline
\end{tabular}
\caption{Fixed-size storage for shortcuts. Each record uses six 32-bit words, interpreted either in triangle mode (top row) or inline-path mode (bottom row).}
\label{fig:path-data}
\end{figure}

\noindent\textbf{Construction.}
During customization, when each shortcut $(u,v)$ is formed from its downward triangle node $z$, we attempt to recover the underlying valley path. If the path length is at most six, it is stored inline; otherwise, the record is set to triangle mode with direct pointers to its two child shortcuts. This process requires no additional asymptotic time beyond what CTL already spends to establish triangle relationships.

\vspace{0.15cm}
\noindent\textbf{Unpacking.}
Path reconstruction proceeds as follows. For each shortcut on the endpoint chain discovered by the integrated query, its record is inspected:
\begin{itemize}
    \item If the record is in inline mode, the stored sequence of nodes is returned directly as the valley path.
    \item If the record is in triangle mode, the stored child pointers are used to recursively expand $(u,z)$ and $(z,v)$, and their expansions are concatenated.
\end{itemize}
This approach is sketched as Algorithm~\ref{alg:expand-hybrid}.
Compared to the basic approach, unpacking is faster in two ways: first, many shortcuts terminate immediately in inline mode; and second, triangle-mode records follow pointers directly without dictionary lookups.

\begin{algorithm}[t]
\caption{Hybrid shortcut unpacking}
\label{alg:expand-hybrid}
\SetCommentSty{textit}
\SetKwFunction{FMain}{ExpandShortcut}
\SetKwProg{Fn}{Function}{}{end}
\Fn{\FMain{$u,v$}}{
    $r \gets \mathsf{Rec}(u,v)$ \tcp{path record}
    \uIf{$r[0] \neq \texttt{NAN}$}{
        \Return $[u, r[1], r[2], \ldots, r[k], v]$
    }
    \uElseIf{$r[1] = \texttt{NAN}$}{
        \Return $[u,v]$
    }
    \Else{
        $z \gets r[1]$\\
        $p_1 \gets$ \textsc{ExpandShortcut}$(u,z)$ via $r$.\texttt{payload$_1$}\\
        $p_2 \gets$ \textsc{ExpandShortcut}$(z,v)$ via $r$.\texttt{payload$_2$}\\
        \Return $p_1 \,\Vert\, p_2[2..]$ \tcp{skip first node of $p_2$}
    }
}
\end{algorithm}

\subsection{Path Array Variants} 
In the basic valley-path unpacking approach, path arrays store only the endpoint of the first shortcut. This keeps the index compact but requires extra work during path reconstruction. A natural extension is to store the full shortcut record, as described in Section~\ref{subsec:extended-valley-unpacking}. We call this approach \emph{Extended Path Arrays}. Although this increases storage requirements, it makes path reconstruction faster since the shortcut records can be accessed directly when unpacking endpoint chains. 

Another alternative that aims in the opposite direction is to not use any path arrays at all, reducing index size at the cost of higher query times. In this case, the $\mathrm{EP}_v$ lookups in Algorithm~\ref{algo:label-endpoint-chain} are replaced by iterating over the upward neighbors of $v$. The correct shortcut is then identified by checking whether it lies on a shortest path using the distance information stored in the labels. This procedure, described in Algorithm~\ref{algo:get-ep}, turns the first while loop of endpoint chain construction into an $A^*$ search with perfect distance bounds.  

\begin{algorithm}[t]
\caption{Shortcut Endpoint Lookup}
\label{algo:get-ep}
\SetCommentSty{textit}
\SetKwFunction{FMain}{GetEP}
\SetKwProg{Fn}{Function}{}{end}
\Fn{\FMain{$v,i$}}{
    \ForEach{$(u,c)\in up(v)$}{
        \If{$c + C(u)[i] = C(v)[i]$}{
            \Return $u$
        }
    }
}
\end{algorithm}

We thus obtain a total of six potential variations of our approach, which differ by what path information is stored in our shortcut graph and path arrays.
For shortcuts we have the following storage options:
\begin{itemize}
\item[(b)] \textbf{B}asic path information: the triangle node only.
\item[(e)] \textbf{E}xtended path information, as shown in Figure~\ref{fig:path-data}.
\end{itemize}
For path arrays we consider three options:
\begin{itemize}
\item[(n)] \textbf{N}o path arrays used.
\item[(b)] \textbf{B}asic path information: the endpoint only.
\item[(e)] \textbf{E}xtended path information, as well as endpoint.
\end{itemize}

Each of the resulting six combinations offers different trade-offs between storage space and query speed.
However we immediately discard the combination (b,e), indicating basic information for shortcuts but extended information for path arrays, since it would result in the worst of both worlds: high storage costs due to extended path arrays, and slow query times as most of the unpacking will still be done using basic information only.
The remaining five options are shown in Table~\ref{fig:variants}(a).

\subsection{Analysis}\label{S:analysis}

Path unpacking is the main bottleneck for answering path queries, especially for long-range queries.
In the following we present a brief complexity analysis based on the different variants of path information stored.

Let $\mathfrak{p}$ denote the length of the shortest path $p$ returned as query answer, $\mathfrak{s}$ the length of its corresponding valley-chain path and $\mathfrak{u}$ the maximum number of upward neighbors in the shortcut graph $S$.
Note that $\mathfrak{s}$ and $\mathfrak{u}$ are both bounded by (twice) the height of the tree decomposition (maximum number of ancestors), and $\mathfrak{s}\leq\mathfrak{p}$.

During path unpacking we first recover the valley-chain path using information stored in $L$, then unpack valley paths using data from $S$.
While the number of steps for these phases is always $\mathfrak{s}$ and $\mathfrak{p}$, respectively, the computation cost of each unpacking step depends on the path information stored in $L$ and $S$.
Without any path information in $L$, identifying the next shortcut on the valley-chain path, starting at $v$, requires a sequential search through $up(v)$ in time $O(\mathfrak{u})$.
If both endpoints of a shortcut are known, finding the corresponding shortcut containing further path information in $S$ can be achieved in $O(\log\mathfrak{u})$, provided $up(v)$ is sorted.
Pointers to the next path record enable constant-time lookup.
Put together, this results in the following complexities for path unpacking:
\[
\setlength{\arraycolsep}{1em}
\begin{array}{lll}
\text{CTL}_{bn}: O(\mathfrak{p}\cdot\log\mathfrak{u} + \mathfrak{s}\cdot\mathfrak{u}) &
\text{CTL}_{bb}: O(\mathfrak{p}\cdot\log\mathfrak{u}) \\
\text{CTL}_{en}: O(\mathfrak{p} + \mathfrak{s}\cdot\mathfrak{u}) &
\text{CTL}_{eb}: O(\mathfrak{p} + \mathfrak{s}\cdot\log\mathfrak{u}) &
\text{CTL}_{ee}: O(\mathfrak{p})
\end{array}
\]

While the complexity for looking up shortcut path information based on two endpoints could be reduced to $O(1)$ using hash tables, in practice the overheads this incurs are not worth it, as sets of upward neighbors are fairly small (cf. Figure~\ref{fig:shortcut_dist}).
In our implementation we do not even perform binary search for CTL$_{eb}$, using basic endpoint information only to avoid additional (constant-time) label lookups.

Storing short valley paths inline does not improve the asymptotic complexity of path reconstruction, but recursion depth is significantly reduced in practice.
Customization overheads for path records do not increase complexity as record construction only requires copying small amounts of readily available data.
Storage costs for extended records are higher than for basic ones though, as each extended path record requires $6 \times 32$ bit, compared to just 32 bit for storing triangle nodes only.

\section{Batch Processing Approach}\label{sec:batch_processing}
Road networks contain motorways and other roads that feature in many shortest paths, and thus will be unpacked frequently.
Thus we can reduce the cost for path unpacking by processing queries in batches and leveraging overlaps between shortest paths returned.
This is an optimization to query answering which requires no changes to the underlying index.

We first compute shortcut chains from between hub nodes and query endpoints as before (two per query).
Whenever unpacking a new shortcut chain, we identify a shortcut chain with maximal common prefix amongst those already unpacked.
This allows us to simply copy the unpacked common prefix, reducing the workload to unpacking the remaining suffix.

To efficiently identify the most similar shortcut chain previously unpacked, we process them in lexicographical order (from hub to query endpoint).
This ensures that the most similar shortcut chain is always the last one processed.

\begin{example}
Consider the graph from Figures \ref{fig:example_road_network} and \ref{fig:sc-graph} with three queries $(v_1,v_{13}),(v_1,v_{14}),(v_2,v_{15})$.
These generate the following shortcut chains, respectively:
\begin{align*}
(v_1,v_{13}):{} & [v_8,v_5,v_1], [v_8,v_{13}]\\
(v_1,v_{14}):{} & [v_7,v_1], [v_7,v_9,v_{14}]\\
(v_2,v_{15}):{} & [v_8,v_5,v_1,v_2], [v_8,v_{15}]
\end{align*}
Sorting shortcut chains lexicographically gives us
{\small\[
[v_7,v_1],
[v_7,v_9,v_{14}],
[v_8,v_5,v_1],
[v_8,v_5,v_1,v_2],
[v_8,v_{13}],
[v_8,v_{15}]
\]}%
leaving $[v_8,v_5,v_1]$ and $[v_8,v_5,v_1,v_2]$ adjacent.
\end{example}

Even stronger, we can show that this approach to reusing common prefixes is optimal in a sense.
Let $S$ be a list of sequences, and let $\rho(S[i],S[j])$ denote the length of the maximal common prefix of the sequences at indices $i,j$.
Define the \emph{overlap} of $S[j]$ w.r.t. $S$ as
\[
\mathcal{O}_S(j) := \max_{i<j} \rho(S[i],S[j])
\]
and the \emph{overlap} of $S$ as
\[
\mathcal{O}(S) := \sum_{j>0} \mathcal{O}_S(j)
\]
which measures the amount of unpacking work we can save, based on a particular unpacking order.

\begin{theorem}\label{th:overlap}
Let $L$ be the lexicographically ordered permutation of $S$. Then
\[
\mathcal{O}(S) = \mathcal{O}(L) = \sum_{j>0} \rho(L[j-1],L[j])
\]
\end{theorem}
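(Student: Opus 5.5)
The plan is to show that the overlap does not depend on the order at all, by interpreting it through the trie (prefix tree) of the sequences in $S$. For the second equality, I will use the standard fact that in lexicographic order the longest common prefix with any earlier element is attained by the immediate predecessor.

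\emph{Step 1 (order-independence).} Let $\Pi(X)$ denote the set of nonempty prefixes of a sequence $X$, so $|\Pi(X)|=|X|$. I first verify that $\rho(S[i],S[j])$ is the length of the longest common prefix of $S[i]$ and $S[j]$. It follows that $\mathcal{O}_S(j)$ is the number of prefixes of $S[j]$ that already occur as prefixes of some earlier $S[i]$. This uses the fact that prefixes of $S[j]$ are nested: the set $\Pi(S[j])\cap\bigcup_{i<j}\Pi(S[i])$ is downward closed, so its size equals the maximal common prefix length with a single earlier sequence. Hence
\[
|S[j]|-\mathcal{O}_S(j)=\Bigl|\Pi(S[j])\setminus\textstyle\bigcup_{i<j}\Pi(S[i])\Bigr|,
\]
which is the number of new trie nodes created when $S[j]$ is inserted. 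With the convention $\mathcal{O}_S(0)=0$, summing over all $j$ gives
\[
\sum_j |S[j]|-\mathcal{O}(S)=\Bigl|\textstyle\bigcup_j \Pi(S[j])\Bigr|.
\]
The right-hand side depends only on the multiset of sequences and not on their order. Therefore $\mathcal{O}(S)=\mathcal{O}(L)$ for any permutation $L$ of $S$, in particular the lexicographic one. Duplicates are handled automatically, since a repeated sequence contributes overlap equal to its full length and creates no new nodes.

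\emph{Step 2 (adjacent maximum under lexicographic order).} I will show that for $i<j$ in $L$ we have $\rho(L[i],L[j])\le\rho(L[j-1],L[j])$, so that $\mathcal{O}_L(j)=\rho(L[j-1],L[j])$. Let $k=\rho(L[i],L[j])$ and let $q$ be the common prefix of length $k$. The set of sequences having $q$ as a prefix forms a contiguous interval in lexicographic order: every sequence $X$ with $q\preceq_{\mathrm{lex}} X\preceq_{\mathrm{lex}} q$ followed by anything must itself begin with $q$. Since $L[i]\le L[j-1]\le L[j]$ and both $L[i]$ and $L[j]$ start with $q$, it follows that $L[j-1]$ also starts with $q$, giving $\rho(L[j-1],L[j])\ge k$.

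The main obstacle is stating the contiguity argument in Step 2 cleanly, with attention to the convention that a proper prefix sorts before its extensions. I would handle it directly by comparing the first position where $L[j-1]$ could differ from $q$. Such a position yields a contradiction with either $L[i]\le L[j-1]$ or $L[j-1]\le L[j]$. The same comparison covers the case where $L[j-1]$ is shorter than $k$, because then $L[j-1]$ would be a proper prefix of $q$ and hence lexicographically smaller than $L[i]$.
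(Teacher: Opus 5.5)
Your proof is correct, but it reaches order-independence by a different route than the paper.

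The paper argues locally. Swapping adjacent entries $S[j],S[j+1]$ can only change their two overlaps. Let $P$ be their maximal common prefix and $P'$ the longest prefix of $P$ shared with some earlier entry. Then the two overlaps are $|P'|$ and $|P|$ both before and after the swap, and they are unchanged if $P$ itself already occurs earlier. Since adjacent transpositions generate all permutations, $\mathcal{O}$ is invariant; the second equality is simply asserted to follow from lexicographic order.

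You instead use a global invariant, $\sum_j |S[j]| - \mathcal{O}(S) = \bigl|\bigcup_j \Pi(S[j])\bigr|$, i.e.\ the non-reusable work equals the number of non-root trie nodes. This buys three things:
\begin{itemize}
\item an explicit order-free formula for $\mathcal{O}(S)$;
\item no case split, and no need for the step the paper leaves implicit, namely that $S[j]$ and $S[j+1]$ both have overlap exactly $|P'|$ with earlier entries;
\item agreement with the intuition behind batch unpacking, where work is incurred once per distinct prefix.
\end{itemize}

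Your Step~2 also supplies the contiguity argument the paper omits. The first-differing-position comparison is what actually carries it, so you can drop the phrase ``$q\preceq_{\mathrm{lex}} X\preceq_{\mathrm{lex}} q$ followed by anything'', which is not a well-formed interval bound.

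The paper's swap argument, in turn, is more elementary: it needs neither the nested-prefix observation nor a counting identity.
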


\begin{proof}
Consider a permutation of $S$ which only swaps two adjacent elements $S[j]$ and $S[j+1]$, so only the overlaps of $j,j+1$ might change.
Let $P$ be the maximal common prefix of $S[j],S[j+1]$.
If $P$ is a prefix for some $S[i]$ with $i<j$ then the overlaps of $S[j], S[j+1]$ do not change.
Otherwise let $P'$ the maximal prefix of $P$ which is also a prefix of some $S[i]$ with $i<j$.
Then the overlaps of $S[j],S[j+1]$ are $|P'|,|P|$ in $S$ and swap to $|P|,|P'|$ in the permutation, so the total overlap does not change.
As any permutation can be obtained by a series of swaps of adjacent elements, this shows the first equality.
The second equality follows from $L$ being lexicographically ordered.
\end{proof}

Theorem~\ref{th:overlap} shows that (1) the total overlap of a list of sequences is order-independent, and (2) the lexicographical ordering makes identification of the prior sequence with maximal common prefix a trivial task.

\begin{algorithm}[t]
\caption{Query batch processing}\label{algo:batch}
\SetCommentSty{textit}
\SetKwFunction{FMain}{GetRoutesInBatch}
\SetKwProg{Fn}{Function}{}{end}
\Fn{\FMain{$B$}}{
	$S\gets []$ \tcp{empty list of shortcut chains}
	\ForEach{$(s,t)\in B$}{
		$h\gets$ hub on shortest $s-t$ path\\
		find shortcut chains $c_s=[h,\dots,s]$ and $c_t=[h,\dots,t]$\\
		add $c_s$ and $c_t$ to $S$\\
	}
	sort $S$ lexicographically\\
	$prev\gets[]$, $prev\_unpacked\gets[]$\\
	\ForEach{$sc\in S$ in order}{
		$P\gets$ maximal common prefix of $sc,prev$\\
		\If{$|P|>1$}{
			copy $prev\_unpacked$ up to $P[last]$\\
		}
		unpack remaining shortcuts in $sc$\\
		$prev\gets sc$, $prev\_unpacked\gets$ unpacked($sc$)\\
	}
	concatenate unpacked paths at hubs\\
	\KwSty{return} concatenated paths\\
}
\end{algorithm}

The resulting approach to batch processing is outlined in Algorithm~\ref{algo:batch}.
Rather than actually sorting shortcut chains, we sort indexes referencing them, which also enables us to easily identify unpacked paths to concatenate at the end by their index.

Note that instead of orienting shortcut chains from hub to query endpoint, we could also orient them from endpoint to hub (or equivalently use reverse lexicographical ordering and common suffixes).
However, shortcut chains are far more likely to agree on their hubs, and shortcuts close to the hubs tend to represent longer valley paths, so a larger amount of unpacking work is avoided by copying common prefixes.

\section{Key Variants}\label{section:variants}

\subsubsection{Parallel Customization}
The customization of tree labeling can be parallelized as follows. Vertices \(v \in V\) are divided into groups \(G_1 \dots G_k\) based on their rank \(i = \tau(v)\). These groups are processed in descending order of \(i\) for customizing the shortcut graph and in ascending order of \(i\) for customizing the labeling, enabling parallel execution. Synchronization between groups is achieved using a barrier to align threads. Each thread writes only to the shortcut and label it is currently processing and reads only from shortcuts and labels in strictly lower (or upper) groups, avoiding read/write conflicts and eliminating the need for locks or atomic operations. This approach parallelizes Line 4 of Algorithms~\ref{algo:custom-ch} and \ref{algo:label-construct}, processing vertices in increasing order of levels \(i\).

\subsubsection{Directed Road Networks}\label{subsec:directed_road_networks}
Our method can be easily adapted for directed road networks by modifying the tree labeling scheme \(L\), specifically the cost arrays \(C\). For each vertex \(v \in V\), we create \emph{forward} and \emph{reverse} cost arrays to store costs for both directions during customization with respect to a given metric. This may be carried out by running Algorithm~\ref{algo:label-construct} for both forward and reverse directions. Then, we can use \om$^-$ and \om$^+$ twice to maintain $L$, once on the forward labels and again on the backward search. For directed versions of dynamic road networks, existing labelling-based methods require increased memory to store precomputed labels. However, road networks are often nearly undirected, with a few notable exceptions (e.g., Stockholm). In such cases,  two distances stored within each label are often identical.

\section{Experiments}\label{section:experiments}

\subsection{Setup}
We conducted experiments on a Linux server equipped with a 13th Gen Intel(R) Core(TM) i9-13900K processor (32 CPUs) and 128GB of main memory. We implemented the proposed method in C++20 and compiled with the GNU C++ compiler 11.4.0 using \textit{-O3} and \textit{-march=native} optimizations. 

\subsubsection{Datasets}
We use thirteen real-world road networks summarized in Table~\ref{table:datasets}. Twelve are from the US, sourced from the 9th DIMACS Implementation Challenge~\cite{demetrescu2009shortest}, and one is from Western Europe, managed by PTV AG~\cite{ptvplanung}.
For some of the larger datasets approaches can run out of memory, which we indicate using a -- symbol.

\begin{table*}[t]
\centering
\caption{Summary of 13 real-world road networks.}
\label{table:datasets}
\small
\begin{tabular}{| l l | r r r r |} 
    \hline
    Network (Net.) & Region & $|V|$ & $|E|$ & diam. & Memory \\
    \hline
    NY & New York City & 264,346 & 733,846 & 720 & 17 MB \\
    BAY & San Francisco & 321,270  & 800,172 & 721 & 18 MB \\
    COL & Colorado & 435,666 & 1,057,066 & 1,245 & 24 MB \\
    FLA & Florida & 1,070,376 & 2,712,798 & 2,058 & 62 MB \\
    NW & Northwest US & 1,207,945 & 2,840,208 & 1,994 & 62 MB \\
    NE & Northeast US & 1,524,453 & 3,897,636 & 2,098 & 62 MB \\
    CAL & California & 1,890,815 & 4,657,742 & 2,315 & 107 MB \\
    LKS & Great Lakes & 2,758,119 & 6,885,658 & 4,127 & 107 MB \\\hline
    EUS & Eastern USA & 3,598,623 & 8,778,114 & 4,461 & 201 MB \\
    WUS & Western USA & 6,262,104 & 15,248,146 & 4,420 & 349 MB \\
    CUS & Central USA & 14,081,816 & 34,292,496 & 5,533 & 785 MB \\
    USA & United States & 23,947,347 & 58,333,344 & 8,440 & 1.30 GB \\ 
    EUR & Western Europe & 18,010,173 & 42,560,279 & 3,175 & 974 MB \\
    \hline
\end{tabular}
\end{table*}

\subsubsection{Baseline Algorithms}
We compare our method, \om, against \emph{Customizable Contraction Hierarchies (CCH)}~\cite{dibbelt2016customizable}, the state-of-the-art fully customizable technique.
CCH is conceptually identical to \om\ with $\theta=\infty$, but the two approaches use different orderings and vary in terms of optimization details as well, such as triangle precompu\-ta\-tion for CCH which speeds up customization but increases index size. To ensure a fair comparison independent of these details, our implementation is used for both.

We also compare against \emph{Hierarchical 2-Hop (H2H)} with shortest paths~\cite{ouyang2023hierarchy}, the state-of-the-art technique for path computation without customization.
While not a direct competitor due to the lack of efficient customization, it provides an additional baseline for comparison of query speed, index size and construction time.

For our method, we use the notations shown in Figure~\ref{fig:variants} to denote the different variants (e.g. CTL$_{eb}$), and use the superscript $^b$ (e.g. CTL$_{eb}^b$) to denote their batch processing counterparts.
Similarly, we denote the sequential and batch variants of Customizable Contraction Hierarchies as CCH and CCH$^b$, respectively.  

Note that Customizable Tree Labeling with the path extension we sketched in \cite{farhan2025customization} corresponds to CTL$_{bb}$.
Since CTL without path information includes Dual-Hierarchy Labeling \cite{farhan2025dual} as the special case $\theta=0$, we are comparing against extensions of DHL augmented with path information as well.
Since DHL is the current state-of-the-art for dynamic road networks, we compare customization against DHL's update times.
While DHL does not maintain path information, the overhead for doing so is negligible.

We exclude CRP~\cite{delling2017customizable} from our comparisons, as CCH has been shown to achieve superior query performance \cite{dibbelt2016customizable,blasius2025customizable}.
We also omit direct comparison against non-customizable approaches other than H2H, including goal-directed search techniques~\cite{goldberg2005computing,maue2010goal,hart1968formal} and hierarchical methods~\cite{sanders2005highway,jung2002efficient}, since these have been substantially outperformed in query time by more recent non-customizable techniques such as Contraction Hierarchies (CH) \cite{geisberger2008contraction,geisberger2012exact} and labeling-based methods \cite{abraham2011hub,luxen2011hierarchy}. 

\subsubsection{Benchmark Generation}
To evaluate customization performance against CCH, we generated five random sets of edge weights based on travel times and report the average customization time.

To evaluate query performance, we sampled 1,000,000 random vertex pairs in each road network. Following~\cite{Pohl1969BidirectionalAH}, we also sampled sets of pairs containing short, medium and long range query pairs. Specifically, for each road network, we generate 10 sets of query pairs $Q_1, \dots, Q_{10}$ as follows: we set $l_{min}$ to be 1000 meters, and set $l_{max}$ to be the maximum distance of any pair of vertices in the network. Let $x = (\frac{l_{max}}{l_{min}})^{1/10}$. For each $1 \leq i \leq 10$, we sample 10,000 query pairs to form each set $Q_i$, in which the distance of the source and target vertices for each query falls in the range $(l_{min} \cdot x^{i-1},\; l_{min} \cdot x^{i}]$. Note that we shall refer to $Q1$--$Q4$, $Q5$--$Q7$ and $Q8$--$Q10$ as short, medium and long range query sets, respectively.

Finally, we compare preprocessing time and labeling size of state-of-the-art methods against those of \om.

\newcommand*{\hdr}[1]{#1}
\newcommand*{\thdr}[1]{\hdr{$\theta{=}#1$}}
\newcommand*{\oom}[1]{\multicolumn{1}{#1}{--}}
\newcommand{\shrinkbox}[3]{\resizebox{\ifdim\width>#1#1\else\width\fi}{#2}{\small#3}}

\newcommand{\HdrTheta}{\thdr{0}&\thdr{2}&\thdr{5}&\thdr{10}&\thdr{20}&\thdr{50}&\thdr{100}}

\newcommand{\TblNet}{\begin{tabular}{|l|}
\hline
\multirow{2}{*}{\hdr{\textbf{Net.}}}\\\\
\hline
NY\\BAY\\COL\\FLA\\NW\\NE\\CAL\\LKS\\
\hline
EUS\\WUS\\CUS\\USA\\EUR\\
\hline
\end{tabular}}

\setlength{\tabcolsep}{4pt}


\newcommand{\TblCustomization}{\begin{tabular}{| c H H c c c c | c |}
\hline
\multicolumn{8}{|c|}{\textbf{Customization Time (CT) [s]}}\\
\hline
\HdrTheta & \textbf{CCH}\\
\hline
0.200&0.186&0.146&0.134&0.123&0.110&0.102&0.081\\
0.149&0.136&0.101&0.091&0.084&0.074&0.070&0.061\\
0.200&0.181&0.129&0.116&0.105&0.093&0.087&0.071\\
0.655&0.573&0.404&0.360&0.326&0.294&0.276&0.239\\
0.672&0.584&0.390&0.349&0.317&0.284&0.266&0.222\\
1.565&1.362&0.935&0.826&0.752&0.660&0.615&0.475\\
1.607&1.398&0.939&0.831&0.751&0.673&0.630&0.511\\
3.783&3.262&2.258&1.962&1.767&1.557&1.436&1.012\\
\hline
4.211&3.601&2.405&2.088&1.831&1.593&1.471&1.139\\
8.352&7.219&4.985&4.346&3.822&3.285&2.994&2.297\\
31.77&27.62&19.75&17.38&15.38&13.22&11.97&7.784\\
49.46&43.75&31.51&27.82&24.65&21.20&19.14&12.37\\
42.69&37.57&27.65&24.46&21.71&18.57&16.59&9.417\\
\hline
\end{tabular}}

\newcommand{\TblQueryDistance}{\begin{tabular}{| c H H c c c c | c | c |}
\hline
\multicolumn{9}{|c|}{\textbf{Query Time (QT) [$\mu$s]}}\\
\hline
\HdrTheta & \textbf{CCH} & \textbf{H2H}\\
\hline
0.097 & 0.439 & 0.989 & 1.283 & 1.607 & 2.115 & 2.559 & 13.97 & 0.275\\
0.102  & 0.426 & 0.934 & 1.152 & 1.389 & 1.735 & 2.013 & 8.310 & 0.276\\
0.125  & 0.512 & 1.147 & 1.433 & 1.721 & 2.089 & 2.382 & 11.80 & 0.482\\
0.184  & 0.652 & 1.424 & 1.812 & 2.234 & 2.801 & 3.288 & 11.79 & 0.483\\
0.149  & 0.632 & 1.436 & 1.813 & 2.228 & 2.783 & 3.245 & 13.25 & 0.576\\
0.222  & 0.774 & 1.733 & 2.223 & 2.784 & 3.617 & 4.451 & 30.71 & 0.658\\
0.259  & 0.852 & 1.831 & 2.313 & 2.872 & 3.623 & 4.350 & 21.99 & 0.709\\
0.315 & 0.991 & 2.013 & 2.606 & 3.197 & 4.163 & 5.183 & 63.51 & 0.809\\
\hline
0.313 & 1.073 & 2.319 & 2.748 & 3.354 & 4.232 & 5.158 & 44.45 & 0.871\\
0.345 & 1.160 & 2.473 & 3.093 & 3.839 & 4.354 & 5.300 & 43.01 & 1.006\\
0.474 & 1.549 & 3.197 & 4.059 & 5.115 & 6.627 & 7.678 & 139.9 & 1.696\\
0.489 & 1.568 & 3.230 & 4.108 & 5.080 & 5.958 & 7.344 & 173.4 & --\\
0.587 & 1.799 & 3.452 & 4.672 & 6.015 & 7.736 & 9.200 & 247.7 & --\\
\hline
\end{tabular}}

\newcommand{\TblQueryBB}{\begin{tabular}{| c H H c c c c |}
\hline
\multicolumn{7}{|c|}{\textbf{CTL$_{bb}$ [$\mu$s]}}\\
\hline
\HdrTheta\\
\hline
14.13&14.77&15.58&15.97&16.31&16.67&17.35\\
15.61&16.22&16.79&17.31&17.27&17.66&18.08\\
25.45&25.92&26.52&27.00&27.21&27.31&27.92\\
34.28&34.45&35.75&35.94&36.62&37.09&37.85\\
45.93&45.76&47.11&47.50&47.78&48.39&48.94\\
46.56&47.33&48.90&49.51&50.44&51.77&53.19\\
58.34&58.36&59.82&60.33&61.24&62.11&63.00\\
97.81&97.41&99.82&100.5&101.4&102.8&104.0\\
\hline
98.93&98.49&99.88&101.4&101.9&103.1&103.9\\
172.9&173.3&176.1&176.1&176.9&178.3&179.2\\
290.7&291.4&296.1&297.0&298.2&301.0&302.5\\
--   &--   &438.4&439.7&439.5&435.1&439.0\\
--   &--   &125.7&127.5&128.6&131.0&133.1\\
\hline
\end{tabular}}

\newcommand{\TblBatchBB}{\begin{tabular}{| c H H c c c c |}
\hline
\multicolumn{7}{|c|}{\textbf{CTL$_{bb}^b$ [$\mu$s]}}\\
\hline
\HdrTheta\\
\hline
3.857&3.604&4.078&4.381&4.673&4.979&5.337\\
4.323&3.824&4.263&4.484&4.713&5.005&5.330\\
5.798&4.842&5.320&5.507&5.802&6.120&6.391\\
6.559&5.647&6.228&6.548&6.841&7.351&7.769\\
8.090&6.549&7.079&7.376&7.697&8.150&8.565\\
7.702&6.903&7.657&8.048&8.566&9.317&9.952\\
9.070&7.744&8.401&8.746&9.173&9.883&10.56\\
11.36&9.603&10.34&10.78&11.31&12.29&13.30\\
\hline
11.67&11.06&11.24&11.65&12.15&12.86&13.79\\
16.35&14.15&14.82&15.37&15.56&16.38&17.24\\
20.44&20.91&22.33&23.03&23.93&25.32&26.70\\
--   &--   &32.15&28.24&28.85&29.93&31.15\\
--   &--   &17.31&18.54&19.14&20.66&22.40\\
\hline
\end{tabular}}

\newcommand{\TblQueryCCH}{\begin{tabular}{|c|}
\hline
\multirow{2}{*}{\textbf{CCH}}\\\\
\hline
35.71\\29.04\\40.05\\47.76\\69.90\\96.31\\94.92\\205.9\\
\hline
175.8\\251.5\\556.7\\757.2\\524.1\\
\hline
\end{tabular}}

\newcommand{\TblBatchCCH}{\begin{tabular}{|c|}
\hline
\multirow{2}{*}{\textbf{CCH$^b$}}\\\\
\hline
18.40\\14.44\\20.43\\17.78\\20.76\\44.79\\35.57\\78.86\\
\hline
61.28\\63.36\\199.4\\292.6\\304.2\\
\hline
\end{tabular}}

\newcommand{\TblQueryHH}{\begin{tabular}{|c|}
\hline
\multirow{2}{*}{\textbf{H2H}}\\\\
\hline
21.96\\26.00\\47.25\\71.09\\101.4\\96.90\\139.8\\223.8\\
\hline
225.6\\435.7\\754.1\\--\\--\\
\hline
\end{tabular}}

\newcommand{\TblQueryEE}{\begin{tabular}{| c H H c c c c |}
\hline
\multicolumn{7}{|c|}{\textbf{CTL$_{ee}$ [$\mu$s]}}\\
\hline
\HdrTheta\\
\hline
2.702&3.637&4.755&5.304&5.778&6.229&6.475\\
3.100&3.860&4.925&5.339&5.584&5.717&5.879\\
4.785&5.846&7.137&7.582&7.889&7.918&8.131\\
6.097&7.061&8.446&9.031&9.500&10.03&10.29\\
8.062&9.040&10.50&11.01&11.51&11.81&11.67\\
7.548&9.425&11.77&12.64&13.60&14.68&15.64\\
9.771&11.47&13.53&14.45&15.15&16.18&17.09\\
13.90&16.89&20.51&21.93&23.08&25.02&26.53\\
\hline
14.00&16.95&20.34&21.58&22.82&24.35&25.86\\
24.08&26.97&30.70&32.13&33.30&34.99&36.61\\
--   &--   &47.78&50.52&52.83&55.3&57.18\\
--   &--   &--   &82.96&85.09&87.99&82.24\\
--   &--   &--   &32.67&29.62&30.90&32.16\\
\hline
\end{tabular}}

\newcommand{\TblBatchEE}{\begin{tabular}{| c H H c c c c |}
\hline
\multicolumn{7}{|c|}{\textbf{CTL$_{ee}^b$ [$\mu$s]}}\\
\hline
\HdrTheta\\
\hline
3.907&4.001&4.691&5.122&5.611&6.130&6.567\\
4.416&4.435&5.110&5.419&5.710&6.033&6.311\\
5.587&5.411&6.114&6.474&6.793&7.165&7.589\\
6.142&6.084&6.849&7.927&7.800&8.447&8.842\\
7.476&7.070&7.603&8.183&8.531&9.190&9.460\\
6.758&7.108&8.385&8.746&9.120&9.800&10.60\\
7.934&7.924&8.75&9.911&10.24&10.81&11.68\\
9.320&9.950&12.03&12.85&13.59&14.09&14.80\\
\hline
9.090&9.860&11.76&12.57&13.32&13.93&14.52\\
12.33&14.91&17.69&18.38&18.73&19.36&20.00\\    
--&--&23.16&26.04&27.82&26.21&27.53\\
--&--&--&41.30&44.18&47.89&56.76\\
--&--&--&29.17&32.28&35.55&38.95\\
\hline
\end{tabular}}

\newcommand{\TblQueryBN}{\begin{tabular}{| c H H c c c c |}
\hline
\multicolumn{7}{|c|}{\textbf{CTL$_{bn}$ [$\mu$s]}}\\
\hline
\HdrTheta\\
\hline
14.87&15.30&16.01&16.31&16.55&16.94&17.45\\
16.34&16.75&17.61&17.80&18.04&18.32&18.78\\
25.05&26.22&26.87&27.21&27.51&27.75&28.24\\
34.95&36.09&36.55&37.01&37.54&38.18&38.61\\
45.95&47.28&48.07&48.07&47.70&49.91&50.46\\
48.62&49.79&51.28&51.10&51.66&52.64&53.73\\
59.10&60.06&61.39&61.73&62.51&63.48&64.55\\
97.85&99.12&101.3&101.8&102.3&103.6&105.0\\
\hline
98.77&100.2&102&103.8&104.2&104.9&106.0\\
168.8&172&172.9&175.4&175.2&178.1&179.1\\
293.4&295.4&297.2&298.8&298.0&302.1&304.0\\
421.1&424.4&427.2&428.9&430.2&431.4&432.0\\
132.3&135.3&136.8&138.9&140.6&140.8&142.4\\
\hline
\end{tabular}}

\newcommand{\TblQueryEN}{\begin{tabular}{| c H H c c c c |}
\hline
\multicolumn{7}{|c|}{\textbf{CTL$_{en}$ [$\mu$s]}}\\
\hline
\HdrTheta\\
\hline
7.366&10.46&9.25&9.822&10.428&11.41&12.41\\
8.125&10.43&9.611&10.09&10.512&11.04&11.73\\
12.56&15.57&14.54&15.09&15.53&16.32&17.00\\
15.86&18.98&17.72&18.39&18.80&19.72&20.46\\
21.65&24.97&24.14&24.56&25.13&25.84&26.52\\
22.49&27.6&25.82&26.59&27.50&29.42&30.94\\
26.31&30.28&28.92&29.52&30.40&31.77&33.44\\
42.69&50.9&45.86&48.32&50.82&52.88&55.02\\
\hline
47.83&54.03&49.9&51.42&53.89&54.99&55.88\\
80.33&84.86&82.47&83.47&85.45&90.74&92.55\\
115.2&122&119.6&120.8&122.6&125.2&127.9\\
174.8&182.3&179.8&180.5&182.2&184.3&186.8\\
61.28&76.09&75.63&75.84&76.36&79.48&83.80\\
\hline
\end{tabular}}

\newcommand{\TblQueryEB}{\begin{tabular}{| c H H c c c c |}
\hline
\multicolumn{7}{|c|}{\textbf{CTL$_{eb}$ [$\mu$s]}}\\
\hline
\HdrTheta\\
\hline
6.599&10.29&9.007&9.534&10.04&10.47&11.20\\
7.211&9.665&9.062&9.654&9.702&10.21&10.72\\
11.75&14.45&13.75&14.15&14.42&14.85&15.21\\
14.34&17.98&16.71&17.33&17.96&18.03&18.94\\
20.35&23.63&22.75&23.23&23.68&23.79&24.46\\
19.59&26.14&24.25&25.22&26.21&27.61&28.68\\
24.26&29.48&28.06&28.80&29.50&30.43&31.34\\
37.89&48.13&45.83&46.86&48.19&49.41&51.52\\
\hline
37.60&47.88&45.32&46.46&48.02&49.59&51.41\\
68.80&84.06&82.00&83.28&84.90&84.88&88.35\\
108.3&126.7&124.2&125.7&126.7&127.9&128.1\\
--&--&--&207.4&208.9&190.7&206.5\\
--&--&--&81.64&82.64&79.19&86.59\\
\hline
\end{tabular}}

\newcommand{\TblBatchBN}{\begin{tabular}{| c H H c c c c |}
\hline
\multicolumn{7}{|c|}{\textbf{CTL$_{bn}^b$ [$\mu$s]}}\\
\hline
\HdrTheta\\
\hline
5.448&4.604&4.954&4.767&5.088&5.994&6.483\\
5.133&4.357&5.357&4.972&5.669&5.949&5.945\\
7.641&5.939&6.932&6.997&7.297&7.549&7.086\\
8.848&7.558&7.926&8.232&8.474&8.820&8.407\\
9.566&7.954&8.266&9.292&9.543&9.897&10.25\\
10.86&9.456&9.2&10.35&10.78&11.39&12.09\\
11.17&10.27&10.67&10.98&10.49&11.83&11.38\\
14.89&12.41&12.73&13.04&13.57&14.23&15.03\\
\hline
15.04&13.5&13.79&14.08&14.37&14.96&15.72\\
20.51&17.02&17.19&17.46&17.85&18.45&19.19\\
29.86&26.51&26.96&27.48&28.06&28.91&30.04\\
36.77&31.57&32.11&32.20&32.47&33.45&34.24\\
23.44&22.94&23.74&24.45&25.71&26.28&27.19\\
\hline
\end{tabular}}

\newcommand{\TblBatchEN}{\begin{tabular}{| c H H c c c c |}
\hline
\multicolumn{7}{|c|}{\textbf{CTL$_{en}^b$ [$\mu$s]}}\\
\hline
\HdrTheta\\
\hline
5.213&5.042&5.728&6.304&6.894&7.792&8.696\\
5.529&5.17&5.673&6.027&6.427&7.028&7.523\\
7.174&6.339&7.17&7.518&8.018&8.241&8.933\\
7.965&7.079&7.656&8.039&8.529&9.228&9.901\\
9.461&8.568&9.178&9.370&9.957&10.66&11.30\\
9.736&9.199&10.3&10.49&11.27&12.55&14.25\\
10.85&9.29&10.17&10.68&11.38&12.43&13.40\\
14.39&13.46&14.42&15.24&16.29&17.39&18.99\\
\hline
13.82&13.08&13.3&13.98&14.96&16.47&17.90\\
17.82&16.15&15.88&16.39&17.37&18.74&20.05\\
23.44&21.49&22.33&23.45&24.97&27.31&29.80\\
28.84&25.63&26.00&26.97&28.24&30.50&32.80\\
19.95&23.30&25.46&26.56&28.49&32.84&37.67\\
\hline
\end{tabular}}

\newcommand{\TblBatchEB}{\begin{tabular}{| c H H c c c c |}
\hline
\multicolumn{7}{|c|}{\textbf{CTL$_{eb}^b$ [$\mu$s]}}\\
\hline
\HdrTheta\\
\hline
4.704&4.517&5.237&5.624&6.009&6.961&7.706\\
5.204&4.661&5.362&5.699&6.047&6.527&7.005\\
6.709&5.959&6.318&6.949&7.068&7.533&8.338\\
7.200&6.763&7.335&7.769&8.180&8.667&9.318\\
8.713&7.889&8.45&8.840&9.189&9.956&10.54\\
8.148&8.378&9.204&9.565&10.05&11.03&12.14\\
9.517&9.188&9.856&10.165&10.69&11.51&12.57\\
11.73&12.33&13.48&14.13&14.42&15.05&16.13\\
\hline
11.17&11.77&12.85&13.32&13.82&14.26&15.32\\
14.72&15.77&15.87&17.12&17.61&18.35&19.03\\
15.28&20.47&22.53&23.76&24.69&26.39&28.06\\
--&--&--&33.13&34.31&35.83&33.81\\
--&--&--&27.53&29.84&31.06&33.18\\
\hline
\end{tabular}}

\newcommand{\TblSizeDistance}{\begin{tabular}{| r H H r r r r | r | r |}
\hline
\multicolumn{9}{|c|}{\textbf{Distance-only Labels [MB]}}\\
\hline
\HdrTheta & \textbf{CCH} & \textbf{H2H}\\
\hline
147&114&69&57&49&40&36 & 16&298\\
119&93&55&46&40&35&32 & 14&288\\
194&149&83&68&57&49&45 & 18&606\\
473&367&213&174&148&125&116 & 48&1329\\
501&385&214&175&149&127&117 & 48&1570\\
1167&879&464&364&295&234&206 & 76&3196\\
1141&866&466&372&307&252&226 & 86&3438\\
2992&2210&1083&826&649&499&427 & 139&7979\\
\hline
3153&2344&1166&893&704&547&474 & 162&10920\\
5218&3885&1946&1501&1192&933&812 & 276&20336\\
20856&15212&7029&5190&3935&2886&2388 & 657&72991\\
37556&27387&12615&9258&6958&5036&4135 & 1101&\oom{c|}\\
38113&27654&12894&9346&6882&4742&3720 & 869&\oom{c|}\\
\hline
\end{tabular}}

\newcommand{\TblSizeBB}{\begin{tabular}{| r H H r r r r |}
\hline
\multicolumn{7}{|c|}{\textbf{CTL$_{bb}$ [MB]}}\\
\hline
\HdrTheta\\
\hline
268&202&112&89&72&56&47\\
214&161&86&69&56&46&40\\
356&265&134&103&82&65&57\\
859&647&339&262&208&164&144\\
916&682&340&263&210&167&147\\
2201&1625&795&596&456&335&278\\
2126&1576&775&588&458&347&296\\
5736&4172&1918&1405&1051&751&607\\
\hline
6007&4388&2033&1486&1108&795&649\\
9913&7245&3367&2477&1859&1341&1100\\
40471&29183&12817&9138&6630&4532&3536\\
\oom{|c}&--&23115&16401&11800&7956&6154\\
\oom{|c}&--&24209&17113&12185&7905&5861\\
\hline
\end{tabular}}

\newcommand{\TblSizeCCH}{\begin{tabular}{| r |}
\hline
\multirow{2}{*}{\textbf{CCH}}\\\\
\hline
21\\18\\24\\61\\61\\98\\110\\179\\
\hline
207\\352\\841\\1407\\1112\\
\hline
\end{tabular}}

\newcommand{\TblSizeEE}{\begin{tabular}{| r H H r r r r |}
\hline
\multicolumn{7}{|c|}{\textbf{CTL$_{ee}$ [MB]}}\\
\hline
\HdrTheta\\
\hline
1114&819&412&308&232&159&122\\
869&634&297&217&161&114&91\\
1477&1067&477&341&247&168&131\\
3534&2580&1195&848&608&409&320\\
3785&2734&1194&848&611&416&327\\
9397&6803&3068&2173&1544&1001&743\\
8973&6496&2892&2052&1466&970&736\\
24873&17832&7693&5384&3791&2441&1792\\
\hline
25899&18614&8017&5554&3855&2443&1788\\
42627&30623&13172&9168&6387&4055&2970\\
\oom{|c}&--&--&36432&25142&15702&11221\\
\oom{|c}&--&--&65812&45107&27808&19700\\
\oom{|c}&--&--&71016&48843&29582&20381\\
\hline
\end{tabular}}

\newcommand{\TblSizeHH}{\begin{tabular}{| r|}
\hline
\multirow{2}{*}{\textbf{H2H}}\\\\
\hline
336\\327\\649\\1309\\1833\\3211\\4128\\9510\\
\hline
10714\\20367\\84848\\--\\--\\
\hline
\end{tabular}}

\newcommand{\TblSizeBN}{\begin{tabular}{| r H H r r r r |}
\hline
\multicolumn{7}{|c|}{\textbf{CTL$_{bn}$ [MB]}}\\
\hline
\HdrTheta\\
\hline
151&118&73&61&53&44&40\\
123&97&59&50&44&39&36\\
199&154&88&73&62&54&50\\
486&380&226&187&161&138&129\\
514&398&227&188&162&140&130\\
1188&900&485&385&316&255&227\\
1164&889&489&395&330&275&249\\
3030&2248&1121&864&687&537&465\\
\hline
3196&2387&1209&936&747&590&517\\
5291&3958&2019&1574&1265&1006&885\\
21032&15388&7205&5366&4111&3062&2564\\
37850&27681&12909&9552&7252&5330&4429\\
38347&27888&13128&9580&7116&4976&3954\\
\hline
\end{tabular}}

\newcommand{\TblSizeEN}{\begin{tabular}{| r H H r r r r |}
\hline
\multicolumn{7}{|c|}{\textbf{CTL$_{en}$ [MB]}}\\
\hline
\HdrTheta\\
\hline
175&142&97&85&77&68&64\\
143&117&79&70&64&59&56\\
224&179&113&98&87&79&75\\
550&444&290&251&225&202&193\\
576&460&289&250&224&202&192\\
1294&1006&591&491&422&361&333\\
1279&1004&604&510&445&390&364\\
3222&2440&1313&1056&879&729&657\\
\hline
3412&2603&1425&1152&963&806&733\\
5657&4324&2385&1940&1631&1372&1251\\
21915&16271&8088&6249&4994&3945&3447\\
39321&29152&14380&11023&8723&6801&5900\\
39520&29061&14301&10753&8289&6149&5127\\
\hline
\end{tabular}}

\newcommand{\TblSizeEB}{\begin{tabular}{| r H H r r r r |}
\hline
\multicolumn{7}{|c|}{\textbf{CTL$_{eb}$ [MB]}}\\
\hline
\HdrTheta\\
\hline
312&246&156&133&116&100&91\\
251&198&123&106&93&83&77\\
403&312&181&150&129&112&104\\
981&769&461&384&330&286&266\\
1035&801&459&382&329&286&266\\
2401&1825&995&796&656&535&478\\
2345&1795&994&807&677&566&515\\
6099&4535&2281&1768&1414&1114&970\\
\hline
6418&4799&2444&1897&1519&1206&1060\\
10611&7943&4065&3175&2557&2039&1798\\
42151&30863&14497&10818&8310&6212&5216\\
\oom{|c}&--&--&19204&14603&10759&8957\\
\oom{|c}&--&--&19339&14411&10131&8087\\
\hline
\end{tabular}}

\begin{table*}[t]
 \centering
 \caption{Comparing customization time of CTL with CCH and query time for distance queries with CCH and H2H.}
 \label{table:distance-only}
 \setlength{\tabcolsep}{3pt} 
 \resizebox{\textwidth}{!}{%
    \TblNet\,\TblCustomization\,\TblQueryDistance
 }
\end{table*}

\begin{table*}[ht]
 \centering
 \caption{Comparing path query times for basic path unpacking in sequential and batch settings with CCH and H2H.}
 \label{table:path_query}
 \setlength{\tabcolsep}{3pt} 
 \resizebox{\textwidth}{!}{%
    \TblNet\,\TblQueryBB\,\TblQueryCCH\,\TblQueryHH\,\TblBatchBB\,\TblBatchCCH
 }
\end{table*}

\begin{table*}[ht]
 \centering
 \caption{Path query times for extended path unpacking with $k = 6$ in sequential and batch settings.}
 \label{table:extended_path_query}
 \shrinkbox{\textwidth}{!}{
    \TblNet\,\TblQueryEE\,\TblBatchEE
 }
\end{table*}

\begin{table*}[ht]
 \centering
 \caption{Path query times for basic/none path unpacking in sequential and batch settings.}
 \label{table:path_query_none_basic}
 \shrinkbox{\textwidth}{!}{%
    \TblNet\,\TblQueryBN\,\TblBatchBN
}
\end{table*}

\begin{table*}[ht]
 \centering
 \caption{Path query times for extended/none path unpacking in sequential and batch settings.}
 \label{table:path_query_none_extended}
 \shrinkbox{\textwidth}{!}{%
    \TblNet\,\TblQueryEN\,\TblBatchEN
}
\end{table*}

\begin{table*}[ht]
 \centering
 \caption{Path query times for extended/basic path unpacking in sequential and batch settings.}
 \label{table:path_query_basic_extended}
 \shrinkbox{\textwidth}{!}{%
    \TblNet\,\TblQueryEB\,\TblBatchEB
}
\end{table*}

\begin{table*}[ht]
 \centering
 \caption{Average path overlap across queries, exploited by CTL batch processing.}
 \label{table:overlap_percentage}
 \resizebox{\textwidth}{!}{%
 \begin{tabular}{| l | c c c c c c c c | c c c c c |}  \hline
    Batch Size&NY&BAY&COL&FLA&NW&NE&CAL&LKS&EUS&WUS&CUS&USA&EUR\\ \hline
    100&28\%&36\%&28\%&24\%&36\%&34\%&30\%&44\%&27\%&21\%&16\%&25\%&15\%\\
    1000&56\%&66\%&62\%&58\%&63\%&57\%&60\%&70\%&58\%&54\%&46\%&50\%&42\%\\
    10000&79\%&86\%&84\%&81\%&84\%&79\%&82\%&86\%&79\%&79\%&72\%&74\%&69\%\\
    100000&92\%&95\%&94\%&93\%&94\%&92\%&93\%&94\%&92\%&92\%&88\%&89\%&85\%\\
    1000000&98\%&99\%&98\%&98\%&98\%&97\%&98\%&98\%&97\%&97\%&95\%&96\%&93\%\\\hline
 \end{tabular}}
\end{table*}

\subsection{Performance Comparison}
In this section, we compare the performance of our method \om~with CCH across four metrics: \emph{query time (QT)} -- the time to retrieve the optimal cost and route; \emph{customization time (CT)} -- the time to customize the preprocessed data structures to a given metric; \emph{labeling size (LS)} -- the memory consumed by the precomputed data structures; and \emph{preprocessing time (PT)} -- the time to construct metric-independent data structures.
Against H2H we compare on query time, labeling size and preprocessing time. Columns named $\theta=x$ indicate CTL with threshold parameter $x$. Note that for $\theta=0$ results for some of the largest datasets are missing, as the labelings exceed memory limits.
This underscores the benefit of having a flexible tradeoffs between performance and resource requirements.

\subsubsection{Query Time}
After customizing each weight metric, we evaluate the performance of path queries using three settings: random queries, queries with varying lengths, and queries with varying batch sizes.
Tables~\ref{table:path_query} to \ref{table:path_query_basic_extended} use a fixed batch size of $10^6$, while in practice batch size would be determined by query load, which can vary over time.

\vspace{0.1cm}
\noindent\emph{Random Queries.}
Table~\ref{table:distance-only} shows times for distance queries. While our focus is on path queries, times for distance queries provide a useful baseline to judge how much time is spent on each phase of query evaluation.

Tables~\ref{table:path_query} to~\ref{table:path_query_basic_extended} report query times for our unpacking algorithms under sequential
(CTL$_{bn}$, \dots, CTL$_{ee}$)
and batch settings
(CTL$_{bn}^b$, \dots, CTL$_{ee}^b$).
The first thing to notice is that compared to distance queries, running times increase significantly, by more than two orders of magnitude in some cases. We also find that the impact of our threshold parameter $\theta$ on query times is much more subdued.
These effects occur as labels only speed up the first two phases, hub identification and endpoint chain construction, causing endpoint chain expansion to become the main bottleneck. This bottleneck is then mitigated through batch processing.

Across all networks, our method consistently outperforms CCH and H2H. In the sequential setting, \(\text{CTL}_{bb}\) achieves between 1.5 and 2.5 times improvements on the US networks.
For the EUR network CCH requires 524 $\mu$s, compared to CTL$_{bb}$ which requires only $\sim$130 $\mu$s, a $\times$4 speedup.
The two largest networks cause H2H and CTL$_{bb}$ with $\theta=0$ to run out of memory.
The difference for the European dataset is so pronounced as shortest paths tend to contain fewer edges (3-4 times fewer than USA).
Thus a smaller fraction of processing time is spent on path unpacking, and any speedup to hub identification and endpoint-chain construction has a greater impact on total query time.
Batch processing increases the gap between CTL and CCH further, up to a factor of 13-14 for EUR, as the resulting speedup to endpoint chain expansion makes any speedup to the first two steps more significant overall.

\begin{figure*}[!htbp]
    \centering
    \includegraphics[width=\textwidth]{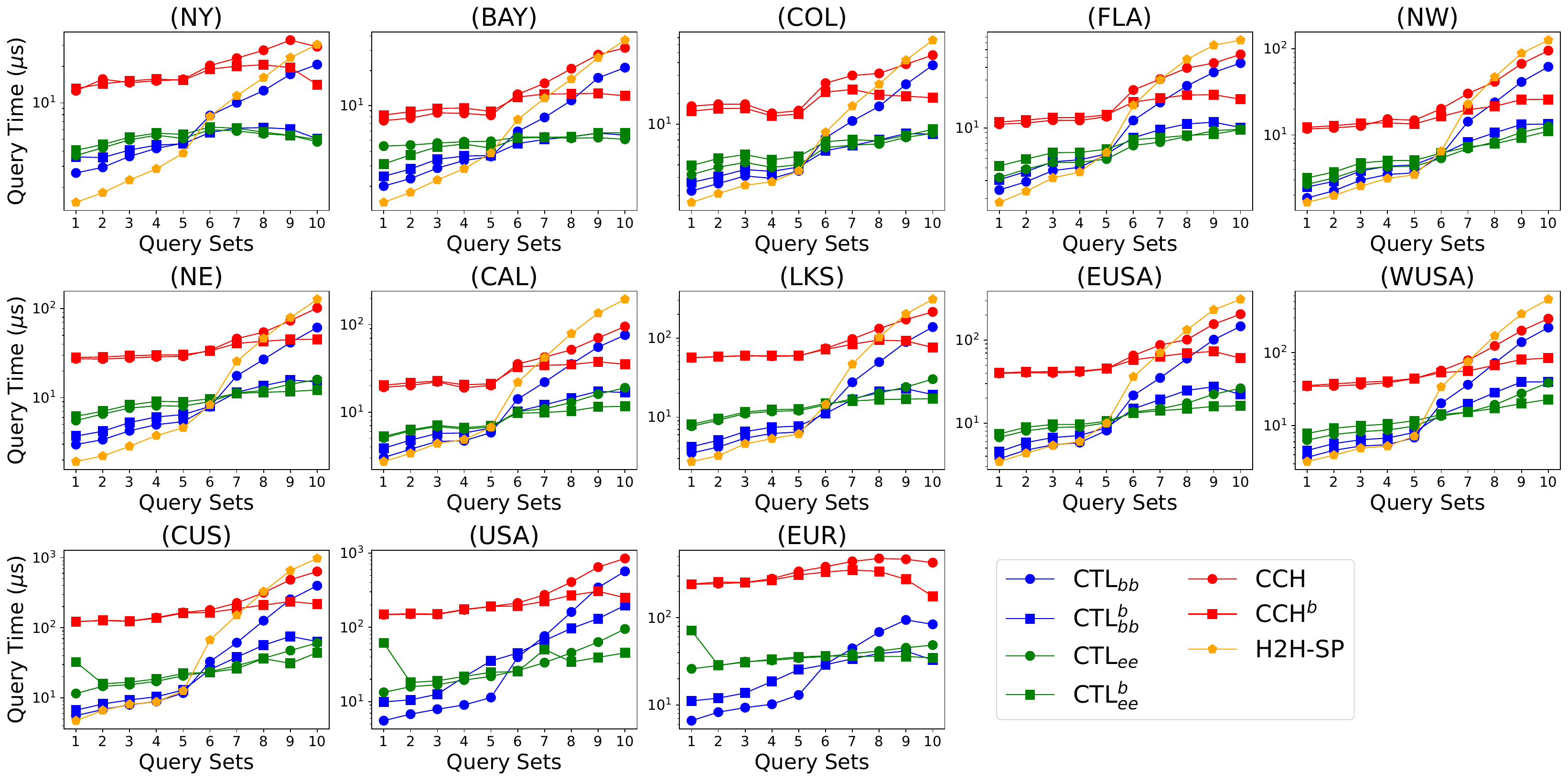}
    \caption{Query times across query sets of varying distances with $k = 6$ for extended path unpacking.}
    \label{fig:varying_distance}
\end{figure*}

\begin{figure*}[!htbp]
    \centering
    \includegraphics[width=\textwidth]{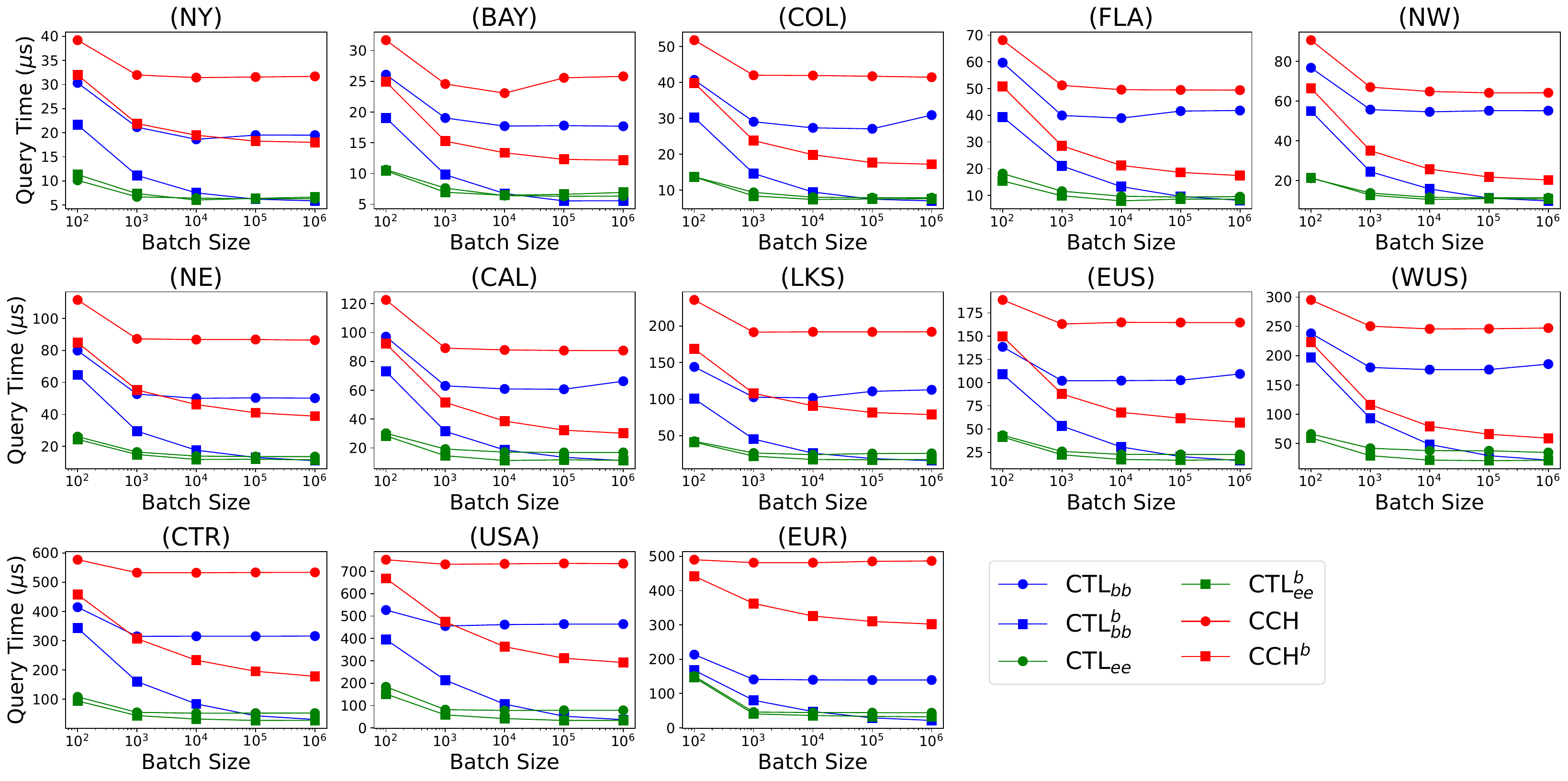}
    \caption{Query performance under varying batch sizes with $k = 6$ for extended path unpacking.}
    \label{fig:varying_batch_sizes}
\end{figure*}

The different CTL variants shown in Figure~\ref{fig:variants} behave largely as expected, with query times reducing as the information stored in shortcuts and labels increases.
In particular, extended path unpacking provides a major performance boost, with CTL$_{ee}$ achieving query times up to 5 times faster than CTL$_{bb}$.
Yet we also observe that batch processing combines poorly with CTL$_{ee}$, occasionally even increasing query times.
This happens as extended path unpacking and batch processing both target the same potential bottleneck.

However, the sheer scale of speed improvement from batch processing for variants other than CTL$_{ee}$ may be surprising, exceeding a full order of magnitude for CTL$_{bn}$ and CTL$_{bb}$ on the larger datasets, and reaching up to factor 6 for CTL$_{en}$ and CTL$_{eb}$.
The reason for this becomes evident from Table~\ref{table:overlap_percentage}, which shows percentages of overlap between query paths from hubs to endpoints, which CTL exploits to reduce endpoint chain unpacking. We observe that for sufficiently large batch sizes, overlap becomes so large that most of the endpoint unpacking work can be skipped.

\vspace{0.1cm}
\noindent\emph{Queries with varying lengths.}  
Figure~\ref{fig:varying_distance} presents results on query sets with varying lengths using $\theta=20$ for basic and extended path unpacking and $k=6$ for the latter. Across all networks, our proposed methods maintain a clear advantage over CCH, CCH\(^b\) and H2H.
While H2H is fastest amongst all methods for short-range queries, its performance quickly declines for longer ranges.
The impact of batch processing becomes most pronounced for long-range queries, where the relative cost of endpoint chain unpacking is largest, due to shortcuts being longest at the top of our hierarchy. For short-range queries batch processing provides little benefit, due to shorter paths with little overlap. This suggest that answering short-range queries sequentially (to avoid overheads and latency) and long-range queries in batches may be a viable strategy.
Note however that for clustered short-range queries, e.g. within one city embedded in a larger network, path overlap and batch processing efficiency increase again.

\vspace{0.1cm}
\noindent\emph{Queries with varying batch sizes.}
We process 1 million random queries in batches of size $\{10^2,\ldots,10^6\}$, using $\theta=20$ for all path variants and $k=6$ for extended path unpacking. Figure~\ref{fig:varying_batch_sizes} shows that batch query processing consistently lowers per-query cost for CTL$_{bb}$ and CCH, with the effect most pronounced on large networks, but has little impact on CTL$_{ee}$, which already has excellent unpacking performance. As a result, CTL$_{bb}$ can rival CTL$_{ee}$ performance once batch size becomes sufficiently large, at around $10^5$.
H2H does not provide specialized batch-processing. While CCH benefits from batch processing just as much as CTL$_{bb}$ in absolute terms, the \emph{relative} benefit is smaller due to the higher cost for hub identification, causing the relative performance gap to widen.

\begin{figure*}
    \centering
    \includegraphics[width=\textwidth]{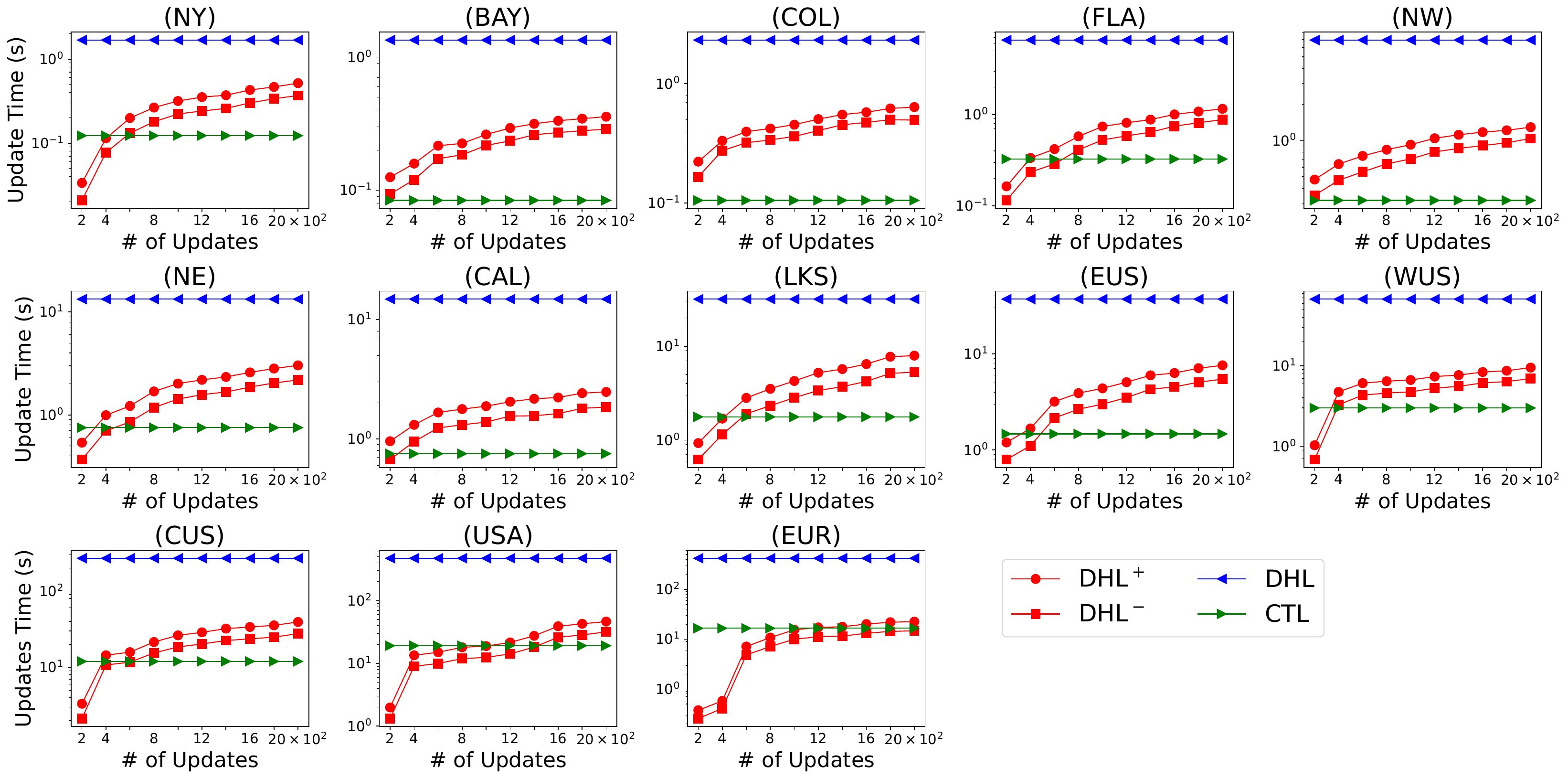}
    \caption{Update time analysis of DHL with varying numbers of updates for both edge cost increases and decreases, compared to the reconstruction time of DHL and the customization time of \om.}
    \label{fig:update_time}
\end{figure*}

\subsubsection{Customization Time}
Table~\ref{table:distance-only} shows that \om{} requires more customization time than CCH, as it must additionally customize tree labelings. On small and mid-sized networks \om{} at $\theta=0$ is about 2-3 times slower than CCH, which increases to 3-4 times on large networks. However, increasing $\theta$ substantially reduces this gap: customization times drop by a factor of two to three between $\theta=0$ and $\theta=100$, bringing \om{} much closer to CCH (for example, $19.1$s compared to $12.4$s on USA).

\begin{table*}[ht]
 \centering
 \caption{Labeling size supporting distance and path queries.}
 \label{table:labeling-sizes}
 \setlength{\tabcolsep}{3pt} 
 \resizebox{\textwidth}{!}{%
    \TblNet\,\TblSizeDistance\,\TblSizeBB\,\TblSizeCCH\,\TblSizeHH
 }
\end{table*}

\begin{table*}[ht]
 \centering
 \caption{Labeling size for CTL variants with $k = 6$ for extended path unpacking.}
 \label{table:labeling-sizes-extended}
 \shrinkbox{\textwidth}{!}{%
    \TblNet\,\TblSizeEB\,\TblSizeEE
 }
\end{table*}

\begin{table*}[ht]
 \centering
 \caption{Labeling size for CTL variants with no path information in $L$.}
 \label{table:labeling-sizes-none}
 \shrinkbox{\textwidth}{!}{%
    \TblNet\,\TblSizeBN\,\TblSizeEN
 }
\end{table*}

\subsubsection{Labeling Size}
The labeling size results in Tables~\ref{table:labeling-sizes} to~\ref{table:labeling-sizes-none} follow the same trend. At $\theta=0$, \om{} produces very large labelings, sometimes more than an order of magnitude larger than CCH, since nearly all search information is precomputed. As $\theta$ increases, labeling size decreases drastically though, shrinking by a factor of about six on average between $\theta=0$ and $\theta=100$. On the largest networks, such as USA and EUR, the labeling size drops from more than $37$GB at $\theta=0$ to about $4$GB at $\theta=100$, while CCH consistently stays below $1$GB.
Unsurprisingly, labeling sizes increase even further for extended path unpacking, especially for CTL$_{ee}$ which requires around 4 times more space than CTL$_{bb}$. While labeling size of H2H exceed that of \om\ by factor 2-4 when catering for distance queries only, support for path computation increases its size only slightly, situating it between CTL$_{eb}$ and CTL$_{ee}$ with $\theta=0$.

\begin{figure}
    \centering
    \includegraphics[width=0.67\textwidth]{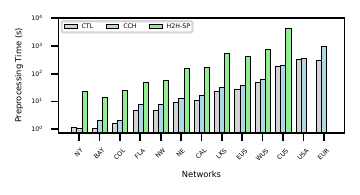}
    \caption{Preprocessing time of \om\ and CCH, and construction time of H2H.}
    \label{fig:preprocessing}
\end{figure}

\subsubsection{Preprocessing Time}
In Figure~\ref{fig:preprocessing} we compare the preprocessing time of CTL against CCH. The preprocessing time of \om~represents the total time spent on constructing the tree hierarchy \(H_G\) and the shortcut graph \(S_G\). Before constructing \(H_G\), we contract the road network by repeatedly removing degree-one vertices, following a similar approach as described in~\cite{farhan2023hierarchical}. As CCH still needs to construct the hierarchy to obtain a good vertex ordering, preprocessing times are virtually identical otherwise -- the only extra work performed by CTL is allocation of label memory. In particular this also means that the choice of $\theta$ has no impact on preprocessing time.
For H2H there is no clean separation between preprocessing and customization, and construction is more than an order of magnitude slower than preprocessing for CTL and CCH (and more than 2 orders slower than customization, cf. Table~\ref{table:distance-only}).

\subsubsection{Tradeoffs}
As we have seen, there are different tradeoffs to be had between labeling size and customization time on the one hand, and query time on the other.
What's more, these tradeoffs can be achieved in two different ways: choice of threshold parameter $\theta$, and choice of path information stored for shortcuts and label entries.
This raises the question which combinations of these options result in the best overall tradeoffs.

To answer this, we plot pairs of (labeling size, query time) achievable by the different configurations considered in this section in Figure~\ref{fig:tradeoffs}.
Here we ignore customization time for the sake of simplicity, but note that it is affected by $\theta$ in the same way as labeling size.

We observe that CTL$_{ee}$ with high $\theta$ values -- as a rule-of-thumb we recommend $\theta\approx\sqrt{|V|/1000}$ -- provides the best possible trade-offs where a direct comparison is possible.
However, the range of trade-offs CTL$_{ee}$ provides is limited to relatively large labeling sizes, at least 10GB for USA and EUR.
If smaller labeling sizes are desired, CTL$_{en}$ and CTL$_{eb}$ offer the next best options.
Essentially we find that extended path unpacking is critical for good performance, but using it for shortcuts only (and not for label entries) may be sufficient.

\begin{figure*}[t]
    \centering
    \includegraphics[width=\textwidth]{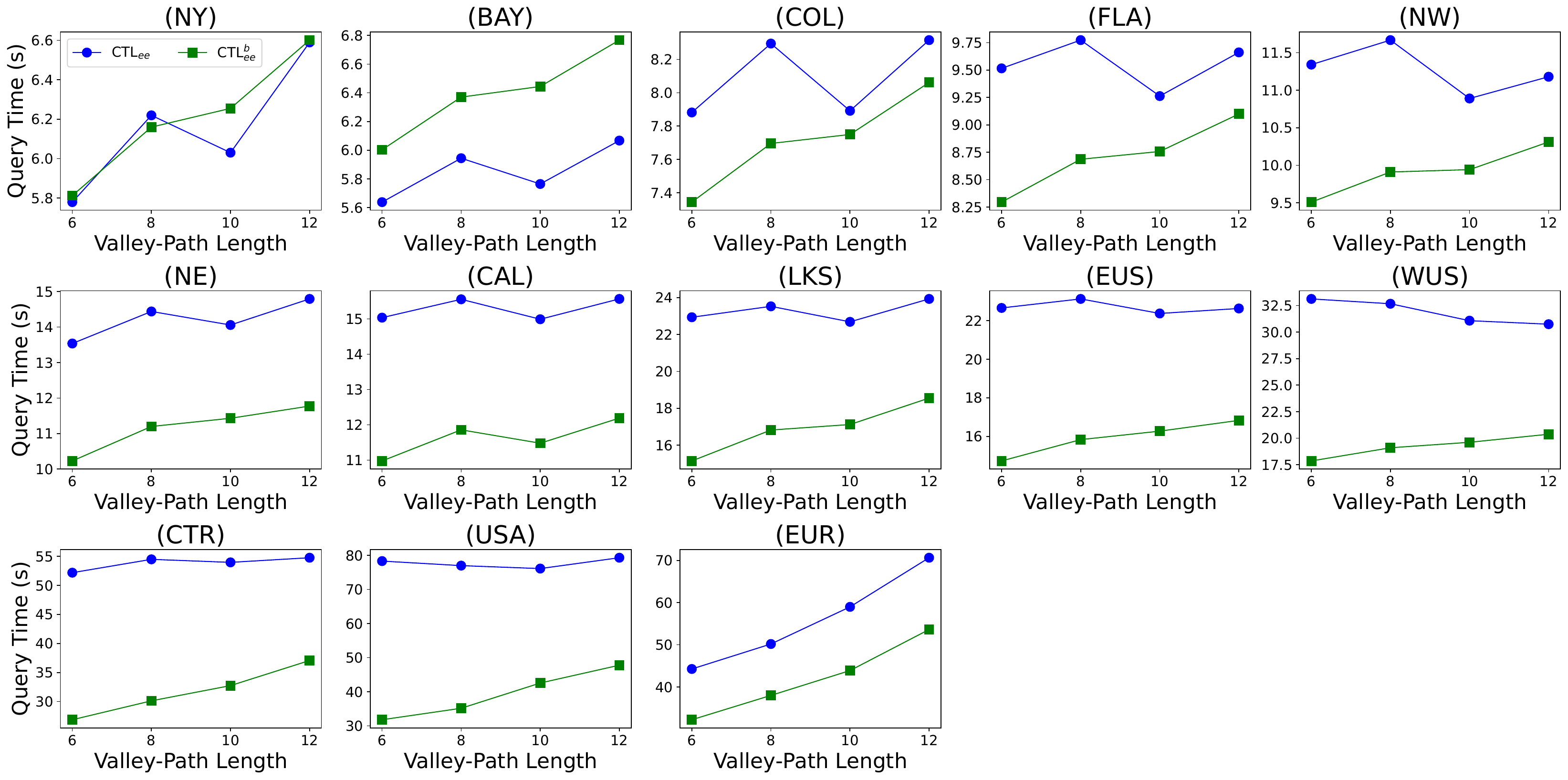}
    \caption{Query performance of extended path unpacking variant of CTL using varying valley-path lengths $k$.}
    \label{fig:varying_extendedpath_length}
\end{figure*}
\begin{figure}[t]
    \centering
    \includegraphics[width=0.6\textwidth]{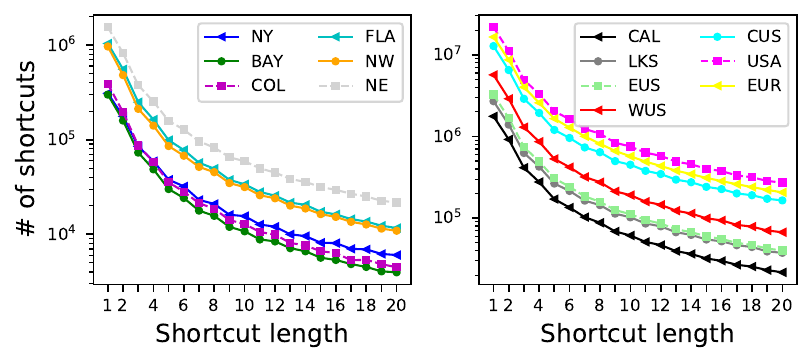}
    \caption{Shortcuts distribution.}
    \label{fig:shortcut_dist}
\end{figure}

Note also that Figure~\ref{fig:tradeoffs} shows only the sequential scenario.
For use cases where batch processing is viable, CTL$_{en}$ or CTL$_{eb}$ should almost always be preferable over CTL$_{ee}$, as batch processing provides very limited performance gains for the latter, as seen in Table~\ref{table:extended_path_query} and Figures~\ref{fig:varying_distance} and~\ref{fig:varying_batch_sizes}.

\subsubsection{Partial Customization}
In Figure~\ref{fig:update_time}, we compare our method \om~with the state-of-the-art partial customizable method DHL \cite{farhan2025dual} that incrementally maintains its labels to reflect small numbers of edge cost updates in road networks.
We compared the update time over batches of updates with sizes ranging from 200, 2,000 to the customization time of \om~with $\theta=20$ for the smaller networks and $\theta=100$ for the larger ones, as well as to DHL's preprocessing time (rebuilding from scratch). Figure~\ref{fig:update_time} shows that DHL's update time increases with the number of updates, surpassing \om's customization time at around 400-1000 updates on most networks, with the exception of EUR where around 2000 updates are required to break even. Yet 2000 updates on EUR means less than 0.005\% of edges changed their weight.
Overall, this indicates that \om~is preferable for applications where updates are frequent, even moderately so.
The reconstruction time of DHL is more than an order of magnitude slower than customization time of \om.

\begin{figure}
    \centering
    \includegraphics[width=0.7\textwidth]{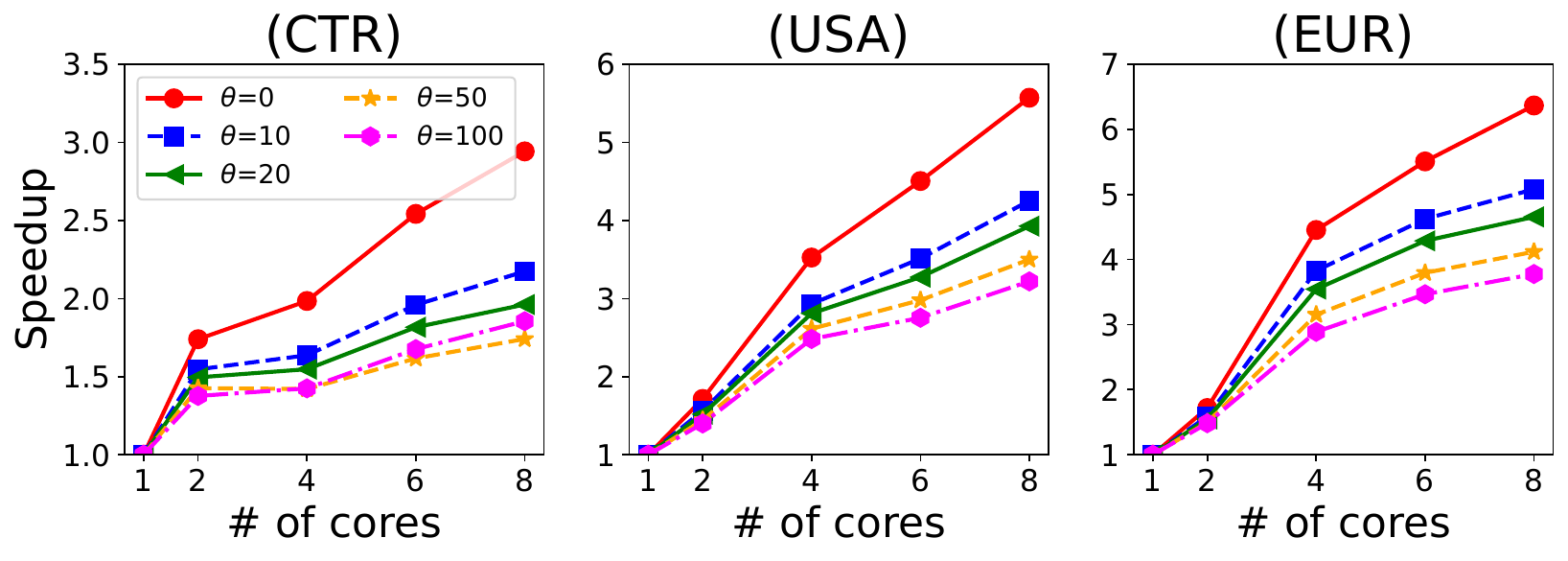}
    \caption{Customization performance with increasing the \# of cores on largest three networks.}
    \label{fig:speedup_scal}
\end{figure}

\subsection{Further Analysis}

We finish by investigating the maximal number $k$ of intermediate valley path nodes stored for extended path unpacking, and parallelization performance.

\subsubsection{Valley Path Length}

In the prior experiments we had set the maximal number of intermediate nodes stored for each shortcut to the largest value $k=6$ which did not increase storage space (see Figure~\ref{fig:path-data}).
However, as discussed in Section~\ref{subsec:extended-valley-unpacking}, it might be possible to improve performance further by increasing this number, offering yet another time-space tradeoff.
We tested this by recording query times for $k\in\{6,8,10,12\}$, with results shown in Figure~\ref{fig:varying_extendedpath_length}.
Here we find that query times tend to increase for larger $k$ values, rather than decrease as one might expect.
To explain this behavior, we examined how long valley paths actually are across the different networks.
As Figure~\ref{fig:shortcut_dist} shows, the vast majority of valley paths contains 7 edges or fewer, so a limit of $k=6$ intermediate nodes will still allow most of them to be stored directly.
Thus increasing $k$ further does not reduce the amount of unpacking work substantially, while the increased storage requirements impede caching.

\subsubsection{Parallel Customization} 
Figure~\ref{fig:speedup_scal} shows the performance of \om~on the three largest networks, CUS, USA, and EUR, when varying the number of cores from 1 to 8. These networks are well-suited for parallel processing due to the large number of vertices at each tree hierarchy level, enabling efficient workload distribution across cores. \om~achieves near-linear performance improvement with increasing cores, demonstrating its scalability and effectiveness in leveraging parallel processing for large-scale networks.
\section{Conclusion}\label{section:conclusion}
In this paper, we study the problem of route optimization to enhance route planning in real-time and dynamic road networks. We build on the state-of-the-art Customizable Tree Labeling (CTL) framework, which supports efficient metric-independent preprocessing and metric-dependent customization of tree labelings to support fast queries. To address its limitations, we systematically explore how path information is stored in shortcut graphs and path arrays, and investigate multiple algorithmic variants that offer different trade-offs between memory usage and query speed. We also introduce a batch processing approach that reuses path information across multiple queries, thereby reducing redundant computation and enhancing scalability. Experiments on thirteen large real-world road networks show that our algorithms substantially outperform existing methods in computing optimal routes, demonstrating their practical effectiveness in dynamic and real-time routing scenarios.

\bibliographystyle{ACM-Reference-Format}
\bibliography{sample}

@inproceedings{hu2025reproducibility,
  title={Reproducibility Report for ACM SIGMOD 2024 Paper:'Hierarchical Cut Labelling-Scaling Up Distance Queries on Road Networks'},
  author={Hu, Yihao and Oliaro, Gabriele and Yang, Jiani and Farhan, Muhammad},
  booktitle={Reproducibility Reports of the 2024 International Conference on Management of Data},
  pages={8--9},
  year={2025}
}

@article{ahn2008effects,
  title={The effects of route choice decisions on vehicle energy consumption and emissions},
  author={Ahn, Kyoungho and Rakha, Hesham},
  journal={Transportation Research Part D: transport and environment},
  volume={13},
  number={3},
  pages={151--167},
  year={2008},
  publisher={Elsevier}
}

@article{farhan2025customization,
  title={Customization Meets 2-Hop Labeling: Efficient Routing in Road Networks},
  author={Farhan, Muhammad and Koehler, Henning and Wang, Qing and Wang, Jiawen and Laupichler, Moritz and Sanders, Peter},
  journal={Proceedings of the VLDB Endowment},
  volume={18},
  number={10},
  pages={3326--3338},
  year={2025},
  publisher={VLDB Endowment}
}

@article{farhan2025dual,
  title={Dual-Hierarchy Labelling: Scaling Up Distance Queries on Dynamic Road Networks},
  author={Farhan, Muhammad and Koehler, Henning and Wang, Qing},
  journal={Proceedings of the ACM on Management of Data},
  volume={3},
  number={1},
  pages={1--25},
  year={2025},
  publisher={ACM New York, NY, USA}
}

@article{koehler2025stable,
  title={Stable Tree Labelling for Accelerating Distance Queries on Dynamic Road Networks},
  author={Koehler, Henning and Farhan, Muhammad and Wang, Qing},
  journal={arXiv preprint arXiv:2501.17379},
  year={2025}
}

@article{blasius2025customizable,
  title={Customizable Contraction Hierarchies--A Survey},
  author={Bl{\"a}sius, Thomas and Buchhold, Valentin and Wagner, Dorothea and Zeitz, Tim and Z{\"u}ndorf, Michael},
  journal={arXiv preprint arXiv:2502.10519},
  year={2025}
}

@inproceedings{delling2011graph,
  title={Graph partitioning with natural cuts},
  author={Delling, Daniel and Goldberg, Andrew V and Razenshteyn, Ilya and Werneck, Renato F},
  booktitle={2011 IEEE International Parallel \& Distributed Processing Symposium},
  pages={1135--1146},
  year={2011},
  organization={IEEE}
}

@inproceedings{zhang2022relative,
  title={Relative Subboundedness of Contraction Hierarchy and Hierarchical 2-Hop Index in Dynamic Road Networks},
  author={Zhang, Yikai and Yu, Jeffrey Xu},
  booktitle={Proceedings of the 2022 International Conference on Management of Data},
  pages={1992--2005},
  year={2022}
}

@article{geisberger2012exact,
  title={Exact routing in large road networks using contraction hierarchies},
  author={Geisberger, Robert and Sanders, Peter and Schultes, Dominik and Vetter, Christian},
  journal={Transportation Science},
  volume={46},
  number={3},
  pages={388--404},
  year={2012},
  publisher={INFORMS}
}

@article{bast2016route,
  title={Route planning in transportation networks},
  author={Bast, Hannah and Delling, Daniel and Goldberg, Andrew and M{\"u}ller-Hannemann, Matthias and Pajor, Thomas and Sanders, Peter and Wagner, Dorothea and Werneck, Renato F},
  journal={Algorithm engineering: Selected results and surveys},
  pages={19--80},
  year={2016},
  publisher={Springer}
}

@article{delling2017customizable,
  title={Customizable route planning in road networks},
  author={Delling, Daniel and Goldberg, Andrew V and Pajor, Thomas and Werneck, Renato F},
  journal={Transportation Science},
  volume={51},
  number={2},
  pages={566--591},
  year={2017},
  publisher={INFORMS}
}

@article{dibbelt2016customizable,
  title={Customizable contraction hierarchies},
  author={Dibbelt, Julian and Strasser, Ben and Wagner, Dorothea},
  journal={Journal of Experimental Algorithmics (JEA)},
  volume={21},
  pages={1--49},
  year={2016},
  publisher={ACM New York, NY, USA}
}

@inproceedings{blum2022customizable,
    title={Customizable Hub Labeling: Properties and Algorithms},
    author={Blum, Johannes and Storandt, Sabine},
    booktitle={International Computing and Combinatorics Conference},
    pages={345--356},
    year={2022},
    organization={Springer}
}

@inproceedings{ouyang2018hierarchy,
  title={When hierarchy meets 2-hop-labeling: Efficient shortest distance queries on road networks},
  author={Ouyang, Dian and Qin, Lu and Chang, Lijun and Lin, Xuemin and Zhang, Ying and Zhu, Qing},
  booktitle={Proceedings of the ACM SIGMOD International Conference on Management of Data},
  pages={709--724},
  year={2018}
}

@article{ouyang2023hierarchy,
  author       = {Dian Ouyang and
                  Dong Wen and
                  Lu Qin and
                  Lijun Chang and
                  Xuemin Lin and
                  Ying Zhang},
  title        = {When hierarchy meets 2-hop-labeling: efficient shortest distance and
                  path queries on road networks},
  journal      = {{VLDB} J.},
  volume       = {32},
  number       = {6},
  pages        = {1263--1287},
  year         = {2023},
}

@inproceedings{chen2021p2h,
  title={P2h: Efficient distance querying on road networks by projected vertex separators},
  author={Chen, Zitong and Fu, Ada Wai-Chee and Jiang, Minhao and Lo, Eric and Zhang, Pengfei},
  booktitle={Proceedings of the ACM SIGMOD International Conference on Management of Data},
  pages={313--325},
  year={2021}
}

@book{tarjan1983data,
  title={Data structures and network algorithms},
  author={Tarjan, Robert Endre},
  year={1983},
  publisher={SIAM}
}

@inproceedings{goldberg2005computing,
  title={Computing the shortest path: {$A^*$} search meets graph theory},
  author={Goldberg, Andrew V and Harrelson, Chris},
  booktitle={SODA},
  volume={5},
  pages={156--165},
  year={2005}
}

@inproceedings{geisberger2008contraction,
  title={Contraction hierarchies: Faster and simpler hierarchical routing in road networks},
  author={Geisberger, Robert and Sanders, Peter and Schultes, Dominik and Delling, Daniel},
  booktitle={International workshop on experimental and efficient algorithms},
  pages={319--333},
  year={2008}
}

@inproceedings{akiba2013fast,
  title={Fast exact shortest-path distance queries on large networks by pruned landmark labeling},
  author={Akiba, Takuya and Iwata, Yoichi and Yoshida, Yuichi},
  booktitle={Proceedings of the ACM SIGMOD International Conference on Management of Data},
  pages={349--360},
  year={2013}
}

@inproceedings{akiba2014fast,
  title={Fast shortest-path distance queries on road networks by pruned highway labeling},
  author={Akiba, Takuya and Iwata, Yoichi and Kawarabayashi, Ken-ichi and Kawata, Yuki},
  booktitle={2014 Proceedings of the sixteenth workshop on algorithm engineering and experiments (ALENEX)},
  pages={147--154},
  year={2014}
}

@inproceedings{jin2012highway,
  title={A highway-centric labeling approach for answering distance queries on large sparse graphs},
  author={Jin, Ruoming and Ruan, Ning and Xiang, Yang and Lee, Victor},
  booktitle={Proceedings of the ACM SIGMOD International Conference on Management of Data},
  pages={445--456},
  year={2012}
}

@article{bast2006transit,
  title={Transit ultrafast shortest-path queries with linear-time preprocessing},
  author={Bast, Holger and Funke, Stefan and Matijevic, Domagoj},
  journal={9th DIMACS Implementation Challenge [1]},
  year={2006},
}

@inproceedings{bodlaender2006treewidth,
  title={Treewidth: characterizations, applications, and computations},
  author={Bodlaender, Hans L},
  booktitle={32nd International Workshop of Graph-Theoretic Concepts in Computer Science},
  pages={1--14},
  year={2006}
}

@inproceedings{10.1145/2463676.2465277,
    author = {Zhu, Andy Diwen and Ma, Hui and Xiao, Xiaokui and Luo, Siqiang and Tang, Youze and Zhou, Shuigeng},
    title = {Shortest Path and Distance Queries on Road Networks: Towards Bridging Theory and Practice},
    year = {2013},
    booktitle = {Proceedings of the ACM SIGMOD International Conference on Management of Data},
    pages = {857–868}
}

@phdthesis{Pohl1969BidirectionalAH,
author = {Pohl, Ira Sheldon},
title = {Bi-Directional and Heuristic Search in Path Problems},
year = {1969},
publisher = {Stanford University},
address = {Stanford, CA, USA},
note = {AAI7001588}
}

@inproceedings{abraham2011hub, 
    author = {Abraham, Ittai and Delling, Daniel and Goldberg, Andrew V. and Werneck, Renato F.}, 
    title = {A Hub-Based Labeling Algorithm for Shortest Paths in Road Networks}, 
    year = {2011}, 
    booktitle = {Proceedings of the 10th International Conference on Experimental Algorithms}, 
    pages = {230–241}, 
    numpages = {12}
}

@inproceedings{abraham2012hierarchical, 
    author = {Abraham, Ittai and Delling, Daniel and Goldberg, Andrew V. and Werneck, Renato F.}, 
    title = {Hierarchical Hub Labelings for Shortest Paths}, 
    year = {2012}, 
    booktitle = {Proceedings of the 20th Annual European Conference on Algorithms}, 
    pages = {24–35}, 
    numpages = {12} 
}

@article{hart1968formal,
  title={A formal basis for the heuristic determination of minimum cost paths},
  author={Hart, Peter E and Nilsson, Nils J and Raphael, Bertram},
  journal={IEEE transactions on Systems Science and Cybernetics},
  volume={4},
  number={2},
  pages={100--107},
  year={1968}
}

@inproceedings{10.1007/11561071_51,
    author = {Sanders, Peter and Schultes, Dominik},
    title = {Highway Hierarchies Hasten Exact Shortest Path Queries},
    year = {2005},
    booktitle = {Proceedings of the 13th Annual European Conference on Algorithms}
}

@inproceedings{sanders2006engineering,
  title={Engineering highway hierarchies},
  author={Sanders, Peter and Schultes, Dominik},
  booktitle={European Symposium on Algorithms},
  pages={804--816},
  year={2006},
  organization={Springer}
}

@article{maue2010goal,
  title={Goal-directed shortest-path queries using precomputed cluster distances},
  author={Maue, Jens and Sanders, Peter and Matijevic, Domagoj},
  journal={Journal of Experimental Algorithmics (JEA)},
  volume={14},
  pages={3--2},
  year={2010},
  publisher={ACM New York, NY, USA}
}

@article{jung2002efficient,
    title={An efficient path computation model for hierarchically structured topographical road maps},
    author={Jung, Sungwon and Pramanik, Sakti},
    journal={IEEE Transactions on Knowledge and Data Engineering},
    volume={14},
    number={5},
    pages={1029--1046},
    year={2002}
}

@inproceedings{arz2013transit,
    title={Transit node routing reconsidered},
    author={Arz, Julian and Luxen, Dennis and Sanders, Peter},
    booktitle={Proceedings of the 12th International Symposium of Experimental Algorithms},
    pages={55--66},
    year={2013}
}

@article{cohen2003reachability,
    title={Reachability and distance queries via 2-hop labels},
    author={Cohen, Edith and Halperin, Eran and Kaplan, Haim and Zwick, Uri},
    journal={SIAM Journal on Computing},
    volume={32},
    number={5},
    pages={1338--1355},
    year={2003}
}

@article{qiu2022efficient,
  title={Efficient shortest path counting on large road networks},
  author={Qiu, Yu-Xuan and Wen, Dong and Qin, Lu and Li, Wentao and Li, Rong-Hua and Ying, Zhang and others},
  journal={Proceedings of the VLDB Endowment},
  year={2022},
  publisher={Association for Computing Machinery (ACM)}
}

@misc{ptvplanung,
    title={Western europe dataset},
    author={PTV AG},
    year={},
    url={http://www.ptv.de}
}

@book{demetrescu2009shortest,
    title={The shortest path problem: Ninth DIMACS implementation challenge},
    author={Demetrescu, Camil and Goldberg, Andrew V and Johnson, David S},
    volume={74},
    year={2009},
    publisher={American Mathematical Soc.}
}

@inproceedings{sanders2005highway,
  title={Highway hierarchies hasten exact shortest path queries},
  author={Sanders, Peter and Schultes, Dominik},
  booktitle={Proceedings of the 13th annual European conference on Algorithms},
  pages={568--579},
  year={2005}
}

@inproceedings{farhan2023hierarchical,
  title={Hierarchical Cut Labelling--Scaling Up Distance Queries on Road Networks},
  author={Farhan, Muhammad and Koehler, Henning and Ohms, Robert and Wang, Qing},
  booktitle={Proceedings of the ACM SIGMOD International Conference on Management of Data},
  year={2024}
}

@article{ouyang2020efficient,
  title={Efficient shortest path index maintenance on dynamic road networks with theoretical guarantees},
  author={Ouyang, Dian and Yuan, Long and Qin, Lu and Chang, Lijun and Zhang, Ying and Lin, Xuemin},
  journal={Proceedings of the VLDB Endowment},
  volume={13},
  number={5},
  pages={602--615},
  year={2020}
}

@inproceedings{zhang2021dynamic,
  title={Dynamic hub labeling for road networks},
  author={Zhang, Mengxuan and Li, Lei and Hua, Wen and Mao, Rui and Chao, Pingfu and Zhou, Xiaofang},
  booktitle={IEEE 37th International Conference on Data Engineering (ICDE)},
  pages={336--347},
  year={2021}
}

@article{farhan2018highly,
  title={A highly scalable labelling approach for exact distance queries in complex networks},
  author={Farhan, Muhammad and Wang, Qing and Lin, Yu and Mckay, Brendan},
  journal={arXiv preprint arXiv:1812.02363},
  year={2018}
}

@article{george1973nested,
  title={Nested dissection of a regular finite element mesh},
  author={George, Alan},
  journal={SIAM journal on numerical analysis},
  volume={10},
  number={2},
  pages={345--363},
  year={1973},
  publisher={SIAM}
}

@InProceedings{luxen2011hierarchy,
author="Luxen, Dennis and Sanders, Peter",
editor="Pardalos, Panos M. and Rebennack, Steffen",
title="Hierarchy Decomposition for Faster User Equilibria on Road Networks",
booktitle="Experimental Algorithms",
year="2011",
publisher="Springer Berlin Heidelberg",
address="Berlin, Heidelberg",
pages="242--253"
}

@phdthesis{Schulz13a,
  author       = {Christian Schulz},
  title        = {High Quality Graph Partitioning},
  school       = {Karlsruhe Institute of Technology},
  year         = {2013},
  url          = {http://digbib.ubka.uni-karlsruhe.de/volltexte/1000035713},
  urn          = {urn:nbn:de:swb:90-357133},
  bibsource    = {dblp computer science bibliography, https://dblp.org}
}

\end{document}